\newtheorem{theorem}{Theorem}[section]
\newtheorem{corollary}[theorem]{Corollary}
\newtheorem{lemma}[theorem]{Lemma}
\newtheorem{assumption}[theorem]{Assumption}
\newtheorem{proposition}[theorem]{Proposition}
\newtheorem{definition}[theorem]{Definition}
\newtheorem{example}[theorem]{Example}
\newtheorem{remark}[theorem]{Remark}
\tikzstyle{b} = [draw, thick, black, -]
\newcommand{\F}{\mathbb{F}}
\newcommand{\N}{\mathbb{N}}
\newcommand{\row}{\text{Row}}
\newcommand{\col}{\text{Col}}
\newcommand{\spec}{\text{Spec}}
\newcommand{\supp}{\text{Supp}}
\newcommand{\rmv}[1]{}
\title[Avoiding Weak Keys in the BIKE Cryptosystem]{A Combinatorial Approach to Avoiding Weak Keys in the BIKE Cryptosystem}
\author{Gretchen L.\ Matthews$^1$}
\address{$^1$Department of Mathematics, Virginia Tech, Blacksburg, VA 24060}
\email{gmatthews@vt.edu}
\author{Emily McMillon$^2$}
\address{$^2$Department of Mathematics, Rice University, Houston, TX 77005}
\email{em72@rice.edu}
\thanks{Matthews was partially supported by NSF DMS-2201075. McMillon was supported by NSF DMS-2303380. Both were partially supported by the Commonwealth Cyber Initiative.}
\subjclass[2020]{Primary 81P94; 
94B05; 
94A60; 
Seconday 94B35; 
}
\keywords{post-quantum cryptography, code-based cryptography, moderate-density parity-check codes, quasi-cyclic parity-check codes}
\date{May 9, 2025}
\dedicatory{}
\begin{document}

\begin{abstract}
Bit Flipping Key Encapsulation (BIKE) is a code-based cryptosystem  that was considered in Round 4 of the NIST Post-Quantum Cryptography Standardization process. It is based on quasi-cyclic moderate-density parity-check (QC-MDPC) codes paired with an iterative decoder. While (low-density) parity-check codes have been shown to perform well in practice, their capabilities are governed by the code’s graphical representation and the choice of decoder rather than the traditional code parameters, making it difficult to determine the decoder failure rate (DFR). Moreover, decoding failures have been demonstrated to lead to attacks that recover the BIKE private key. In this paper, we demonstrate a strong correlation between weak keys and $4$-cycles in their associated Tanner graphs. We give concrete ways to enumerate the number of $4$-cycles in a BIKE key and use these results to present a filtering algorithm that will filter BIKE keys with large numbers of $4$-cycles. These results also apply to more general parity check codes.
\end{abstract}

\maketitle

\section{Introduction} \label{sec:intro}

In recent years, there has been substantial work on the development of post-quantum cryptography to provide cryptosystems which are secure against attacks aided by a quantum computer. In the mid-1990s, it was established by Shor \cite{Shor_94, Shor_97} that quantum algorithms make systems like RSA and Diffie-Hellman key exchange insecure. He demonstrated a polynomial-time algorithm for solving the integer factorization problem and the discrete log problem upon which these cryptosystems are based. At that time, code-based cryptography had already been introduced with the advent of the McEliece cryptosystem \cite{McEliece}. The McEliece cryptosystem is defined using error-correcting codes\textemdash binary Goppa codes in particular. Its security is based on the NP-hard problem of decoding a general linear code \cite{BMvT_78} and the indistinguishability of binary Goppa codes, which are different from those assumptions supporting the security of RSA and elliptic curve cryptography.

While key sizes in the McEliece cryptosystem may be scaled to safeguard against an information set decoding attack aided by Grover's algorithm \cite{Grover_v_McEliece}, a quantum algorithm, the McEliece cryptosystem is believed to be post-quantum secure; indeed, the speed-up due to Grover's algorithm is quadratic. However, because of the large sizes of the public and private keys associated with desired security levels, there is a need to define code-based cryptosystems with smaller keys. 

The increased urgency to find quantum-safe public key cryptosystems has been driven by advances in quantum computing and incentivized by processes such as National Institute of Standards (NIST) Post-Quantum Cryptography Standardization. Among the Round 4 candidates in the NIST Post-Quantum Cryptography Standardization process was Bit Flipping Key Encapsulation (BIKE) \cite{BIKE22}, a code-based cryptosystem. It is based on quasi-cyclic moderate-density parity-check (QC-MDPC) codes, which were first considered for use in the McEliece cryptosystem in \cite{MTSB13}. This family of codes provides several advantages. The quasi-cyclic nature of the codes allows for their compact representation, enabling a matrix with $r^2$ (or $2r^2$) entries to be described by $d$ (or $2d$) $0-1$ bits, where $d < r$. By using a parity-check representation that is not too dense, iterative decoding, which has proven successful for low-density parity-check (LDPC) codes, may be used. In particular, a BIKE private key is a binary quasi-cyclic matrix $H \in \F_2^{r \times 2r}$, where $\F_2:=\{0,1\}$ denotes the binary field with two elements. The associated public key is its systematic form $H'$. Upon receipt of an encrypted message $H'\mathbf{e}^T$, where $\mathbf{e}$ is a vector of restricted weight associated with the message $\mathbf{m}$, the receiver uses the private key $H$ and a decoding algorithm to reveal $\mathbf{m}$. Hence, decoding is an integral part of BIKE. Unfortunately, as is the case with most iterative decoders, decoding failure can occur. 

Decoding failure was first harnessed to attack a QC-MDPC-based cryptosystem by  
Guo, Johansson, and Stankovski \cite{GJS16} in 2016 where they demonstrated that decoding failures can lead to a reaction attack. They note a relationship between the private key and decoding failure. They associate a distance spectrum with a private key, which tracks distances between nonzero entries of the key. Then they show that the private key can be determined from this distance spectrum. In particular, they note  a strong correlation between
the decoding error probability and vectors with certain distance profiles. We will build on this notion of distance spectrum to offer a perspective focused on combinatorial structures known to lead to iterative decoder failure. 

Since the work of \cite{GJS16}, there has been a focus on understanding weak keys. 
Sendrier and Vasseur
\cite{SV20} introduced three classes of weak keys, which have since been further analyzed in e.g. \cite{NSPZNGD23}. Additional classes of weak keys were introduced in \cite{WWW23, AYU20}.

Much work has focused on the decoding failure rate (DFR) of the BIKE system. In most cases, iterative decoder reliability and DFRs cannot be calculated explicitly.  One approach in the literature relies on extensive simulation results to try to extrapolate error-floor performance for the extremely small DFRs necessary for use in post-quantum cryptographic applications \cite{T18}. Other analyses include \cite{BBCPS21, SV20b}. Notably, in \cite{ABHLPR22}, the authors explore DFRs for relatively small versions of the BIKE cryptosystem in an attempt to extrapolate the behavior of the much larger versions of these codes that would actually be used in practice. The experimental data for their work is available publicly on Github, and  their data was used extensively to try to understand BIKE decoder failure in this present work. Alternative approaches include 
deriving a theoretical model to estimate the DFR, relying on some generically accepted heuristic assumptions, and determining a bound on the error correcting capability for each key, which does depend on additional assumptions; see \cite{Baldi_in_place, MACKAY_Postol_03, Santini, SV20b, T18, V21}. There are tradeoffs in these techniques for handling the DFR: the latter being the most reliable but with the most conservative parameters; the former providing some balance between good parameters and assumptions; and extrapolation giving the best parameters but subject to the validity of the assumptions on the DFR decay rate and behavior. For each approach, one may consider the role of weak keys. Loosely speaking, the idea is that because weak keys may exhibit a higher DFR than other keys, the average DFR may be decreased by filtering out weak keys, which could be used to prove that the DFR was negligible. At the same time, one must consider the impact on the overall key space and performance, in particular, to ensure that discarding weak keys does not facilitate an attack by key recovery, adjusting parameters to compensate for removed keys if necessary. 

In this paper, we provide a new filter for weak keys in the BIKE cryptosystem. The filter is based on combinatorial structures ($4$-cycles in an underlying graph) that may lead to decoder failure. We enumerate them directly from the polynomials that define the quasi-cyclic code upon which the private key is based. We provide experimental results demonstrating the utility of this approach. A primary contribution is the evidence that 4-cycles within a single circulant block of the parity-check matrix are more conducive to weak keys than 4-cycles across blocks. Empirical observations in some instances show that there are significantly more 4-cycles within each circulant block in the keys with decoding failures suggesting that focusing on each circulant block independently may be a better strategy to avoid weak keys.

This paper is organized as follows. Preliminaries are given in Section \ref{section:prelim}, including background on parity-check codes in Section~\ref{section:codes}, a review of BIKE in Section~\ref{section:BIKE}, and some results on the mathematical basis for weak keys in BIKE in Section~\ref{section:weak_keys}. Section~\ref{section:cycles_weak} contains results on counting $4$-cycles in the BIKE cryptosystem (Section~\ref{sec:bikecycles}) and more generally for any QC-MDPC code based cryptosystem (Section~\ref{sec:generalcycles}). Section \ref{section:filter} describes a new weak key filter for the structures identified in Section~\ref{sec:bikecycles}. In Section~\ref{section:ideal}, we explore the question of how and when $4$-cycles can be avoided completely in QC-MDPC codes. Section \ref{section:exp} contains experimental results and analysis. Finally, the paper concludes in Section~\ref{section:conclusions}.

\section{Preliminaries} \label{section:prelim}

In this section, we provide an overview of the BIKE cryptosystem focusing on facets which are relevant for this work. Before introducing the BIKE cryptosystem, we review relevant definitions, notation, and facts from coding theory.

Given a positive integer $n$, $[n]:=\left\{ 1, \dots, n \right\}$ and $[n]_0:=\left\{ 0, \dots, n-1 \right\}$. We use $\N$ to denote the set of nonnegative integers. The set of $m \times n$ matrices with entries in a ring $R$ is written as $R^{m \times n}$. The $i^{\text{th}}$ row and $j^{\text{th}}$ column of $A \in R^{m \times n}$ are denoted by $\row_iA$ and $\col_jA$; the transpose of $A$ is $A^T \in R^{n \times m}$. The $r \times r$ identity matrix is denoted $I_r$. Given a matrix $A$, the entry in the $i^{\text{th}}$ row and $j^{\text{th}}$ column is denoted $(A)_{i,j}$. Vectors are written in boldface, i.e. as $\mathbf{v}$. The support of a vector $\mathbf{v} \in \F_2^n$ is $\supp(\mathbf{v}):= \left| \left\{ i \in [n]: v_i \neq 0 \right\} \right|.$
Given an arbitrary $n$-tuple $\mathbf{x} = (x_1, \dotsc, x_n)$, $\mathbf{x}^i$ denotes the cyclic shift of $\mathbf{x}$ by $i$ positions to the right, i.e.,
\[ \mathbf{x}^i = (x_{n-i+1}, x_{n-i}, \dotsc, x_{n}, x_1, \dotsc, x_{n-i})\] and 
for an interval $[a,b) \subseteq [n],$
$\mathbf{x}^{[a,b]}:=(x_a, x_{a+1}, \dots, x_{b-1})$.
 A circulant matrix $A$ is a square matrix for which $\row_{i+1}A$ is a cyclic shift right to the right of $\row_iA$. 
We use the standard Hamming weight which for $\mathbf{v} \in \F_2^n$ is  $\text{wt}(\mathbf{v}):= \left| \supp (\mathbf{v}) \right|$. Two codes are permutation equivalent if and only if they are the same up to a fixed permutation of the codeword coordinates.

\subsection{Parity-check codes} \label{section:codes}

Given a matrix $H \in \F_2^{m \times n}$, the binary linear code with parity-check matrix $H$ is
$$C(H)=\left\{ \mathbf{c} \in \F_2^n \mid \mathbf{c} H^T = \mathbf{0} \right\}.$$
In other words, $C(H)$ is the nullspace of $H$. Iterative decoding algorithms operate on the Tanner graph of a code \cite{Tanner}, which depends on the parity-check matrix itself. The Tanner graph $T(H)$ of $H$ (or of $C(H)$) is the graph with biadjacency matrix $H$. In particular, $T(H)$ is the bipartite graph defined as $T(H) = (V \cup W; E)$, where the vertex set $V = \{v_1, \dotsc, v_n\}$ is the set of \textit{variable nodes} (also called \textit{bit nodes}, the vertex set $W = \{c_1, \dotsc, c_m\}$ is the set of $\textit{check nodes}$, and $E$ is the set of edges. The set $V$ of vertices corresponds to the columns of $H$, and the set $W$ of vertices corresponds to the rows of $H$. Vertices $v_i$ and $x_j$ are adjacent if and only if $(H)_{j,i}\neq 0$. In this setting, $\mathbf{c}=(c_1,c_2,\ldots,c_n)\in\F_2^n$ is a codeword of $C(H)$ if and only if the assignment of the values $c_1, c_2, \ldots, c_n$ to their corresponding variable nodes $v_1, v_2, \ldots, v_n$ of the Tanner graph satisfy
$$\sum_{i\in \mathcal{N}(v_j)}{ (H)_{j,i} c_i}= 0 \pmod 2$$
for all $j \in [m]$, where $\mathcal{N}(v_j)$ denotes the set of vertices adjacent to $v_j$ in $T(H)$. An example appears in Figure \ref{fig:Tanner_graph}.

\begin{figure}[h]
\begin{minipage}{0.45\textwidth}
\[ H = \begin{bmatrix} 1 & 1 & 0 & 1 & 0 & 0 \\ 0 & 1 & 1 & 0 & 1 & 0 \\ 0 & 0 & 0 & 1 & 1 & 1 \end{bmatrix} \]
\end{minipage} \begin{minipage}{0.45\textwidth} \centering

\begin{tikzpicture}[scale=.4,square/.style={regular polygon,regular polygon sides=4}]

\node (v1) at (0,0) [circle,draw,fill=black,scale=0.5] {};
\node (v2) at (2,0) [circle,draw,fill=black,scale=0.5] {};
\node (v3) at (4,0) [circle,draw,fill=black,scale=0.5] {};
\node (v4) at (6,0) [circle,draw,fill=black,scale=0.5] {};
\node (v5) at (8,0) [circle,draw,fill=black,scale=0.5] {};
\node (v6) at (10,0) [circle,draw,fill=black,scale=0.5] {};
\node (c1) at (2,3) [square,draw,fill=black,scale=0.5] {};
\node (c2) at (5,3) [square,draw,fill=black,scale=0.5] {};
\node (c3) at (8,3) [square,draw,fill=black,scale=0.5] {};

\path[b] (v1) to (c1);
\path[b] (v2) to (c1);
\path[b] (v2) to (c2);
\path[b] (v3) to (c2);
\path[b] (v4) to (c1);
\path[b] (v4) to (c3);
\path[b] (v5) to (c2);
\path[b] (v5) to (c3);
\path[b] (v6) to (c3);

\end{tikzpicture}
\end{minipage}
\caption{A parity check matrix $H$ (left) for a linear code and its corresponding Tanner graph $T(H)$ (right). Here and throughout, variable nodes are represented with circles and check nodes are represented with squares.}
\label{fig:Tanner_graph}
\end{figure}
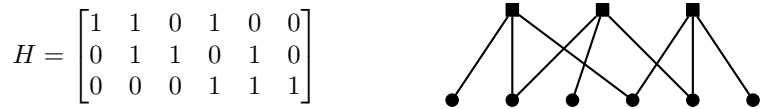

An important class of codes in application is the family of quasi-cyclic codes. In particular, the BIKE cryptosystem uses quasi-cyclic parity-check codes. 

\begin{definition}
    A \textit{quasi-cyclic code} is a linear code for which cyclically shifting a codeword a fixed number $n_0 \neq 0$ of  positions either to the right or to the left results is another codeword. For $n_0 = 1$, a quasi-cyclic code is a cyclic code. The integer $n_0$ is called the \textit{shifting constraint}. 
\end{definition}

There are several standard ways to represent a quasi-cyclic parity-check code. Given such a code with shifting constraint $n_0$ and positive integers $c$ and $t$ with $c < t$, let $H_i \in \F_2^{c \times n_0}$ for $i \in [t]$. Then the nullspace of  
\[ H = \begin{bmatrix} H_1 & H_2 & \dotsm & H_t \\
                       H_t & H_1 & \dotsm & H_{t-1} \\
                       \vdots & \vdots & \ddots & \vdots \\
                       H_2 & H_3 & \dotsm & H_1 \end{bmatrix} 
                       \in \F_q^{tc \times t n_0}
                       \]
is a quasi-cyclic parity-check code. Such a matrix is often represented in its circulant form, $H_c$, instead, where $C(H_c)$ is permutation equivalent to $C(H)$. The matrix $H_c$ is a $c$ by $n_0$ block matrix composed of $t \times t$ \textit{circulant} matrices
$J_{i,j} \in \F_2^{t \times t}$ for $i \in [c]$ and $j \in [n_0]$:
\begin{equation} H_c = \begin{bmatrix} J_{1,1} & J_{1,2} & \dotsm & J_{1,n_0} \\
                         J_{2,1} & J_{2,2} & \dotsm & J_{2,n_0} \\
                         \vdots & \vdots & \ddots & \vdots \\
                         J_{c,1} & J_{c,2} & \dotsm & J_{c,n_0} \end{bmatrix}.
\label{eq:circulant}
\end{equation}

The most commonly studied parity-check codes are \textit{low-density parity-check} (LDPC) codes, which are defined by sparse parity-check matrices, meaning matrices with relatively few nonzero entries, or alternatively. More precisely, an LDPC code may be defined by a parity check matrix with  row weights that are $O(1)$. When paired with iterative decoders, LDPC codes are capacity-approaching, meaning that they allow for communication near the Shannon limit so that the probability of lost information can be made as small as desired provided the noise is bounded \cite{Mackay1997near}. Moreover, LDPC codes can be decoded in time linear in their block length. Unfortunately, the speed and performance of LDPC code iterative decoders come at the cost of guaranteed decoder success. Even if there are relatively few errors in the received word, the decoder may fail to converge to a solution. Various Tanner graph substructures, such as pseudocodewords \cite{KS07}, trapping sets \cite{NCMV12}, stopping sets \cite{DPTRU02}, and absorbing sets \cite{DZAWN09}, have been shown to contribute to iterative decoder failure.

While suitable for other applications, when used in code-based cryptosystem, the sparisty of LDPC codes yield attacks \cite{Monico_LDPC_00, Baldi_ISIT_07, Baldi_08}. Consequently, proposed cryptosystems use a slightly denser analog of LDPC codes, \textit{moderate-density parity-check} (MDPC) codes. An MDPC code has a parity-check matrix representation $H$ with row weights $O(\sqrt{n})$, where $n$ is the length of the code. While MDPC codes avoid the attack deployed on LDPC-based systems, actual performance guarantees are elusive.

Iterative decoding is optimal on cycle-free graphs \cite{W96}, but unfortunately, codes with Tanner graph representations without cycles have poor distance parameters \cite{ETV99}. An endeavor when working with LDPC codes is to maximize the \textit{girth}, or length of the shortest cycle, of a code's Tanner graph. Graphs without short cycles are locally cycle-free, which leads to better decoder performance.

Much research exists on necessary and sufficient conditions for achieving quasi-cyclic LDPC (QC-LDPC) codes with given Tanner graph conditions, such as large girths and minimizing harmful graph substructures. In \cite{F04}, Fossorier gave the necessary and sufficient conditions for a QC-LDPC code's Tanner graph to have a given girth for the class of QC-LDPC codes whose circulants all have row weight $1$. Crucially, Fossorier showed that this class of parity-check codes has associated Tanner graphs with maximum girth $12$. Because Tanner graphs are bipartite, their minimum girth is $4$.

As parity-check matrices and their associated Tanner graphs become denser (i.e. closer to MDPC), it becomes more difficult to avoid short cycles. In fact, any quasi-cyclic parity-check code with a circulant matrix with row weight at least $3$ contains a $6$-cycle,  so the problem of ``optimizing'' such a code for girth is reduced to avoiding $4$-cycles.

It is generally noted that the performance of a parity-check code depends on not only the girth of the Tanner graph representing it but also on the number of short cycles (see, for instance, \cite{halford_06}). There are algorithms to count the number of short cycles (see \cite{Dehghan_20} and references therein) but to our knowledge do not provide a formula for the number of them. Our work makes extensive use of the quasi-cyclic structure of the parity-check codes to provide explicit ways to count short cycles. In Section~\ref{section:cycles_weak}, we will provide a framework to count $4$-cycles in the parity-check matrices used in BIKE.

\subsection{BIKE} \label{section:BIKE}

The BIKE cryptosystem \cite{BIKE22} was considered as a candidate in the National Institute of Standards (NIST) Post-Quantum Cryptography Standardization process, though not ultimately selected for standardization. The private key in BIKE is an $r \times 2r$ quasi-cyclic binary matrix $$
H=\left[ H_0 \ H_1\right] \in \F_2^{r \times 2r}
$$ composed of two circulant blocks, $H_0$ and $H_1$, each of size $r \times r$ with $r$ prime and so that $x^r - 1$ has only two irreducible factors modulo $2$. These circulant blocks have the form 
\begin{equation} H_i=\begin{bmatrix} 
    h_{i,0} & h_{i,1} & \dotsm & h_{i,r-1}\\
    h_{i,r-1} & h_{i,0} & \dotsm & h_{i,r-2}\\
    \vdots & \vdots & \ddots & \vdots \\
    h_{i,1} & h_{i,2} & \dotsm & h_{i,0}
    \end{bmatrix} \in \F_2^{r \times r}.
\label{eq:Hi_circulant}
\end{equation} 
Each column of $H$ (and hence of $H_i$, for $i \in \{0,1\}$) has weight $d$, and each row
of $H_i$, $i \in \{0,1\}$, has weight $d$
(so that each row of $H$ has weight $2d$). Because $H$ is quasi-cyclic, $\row_j H_i$ is a shift to the right of $\row_{j-1} H_i$ for all for $j \in [r-1]$, allowing $H_i$ to be represented by its first row,
$$\left( h_{i,0},  h_{i,1},  \dots,  h_{i,r-1} \right) \in \F_2^r,$$
or more succinctly by $\supp \left( \row_1(H_i)\right)$, the set of $d$ coordinates in the support of its first row:
$$\left\{ j \in [n] \mid h_{i,j} \neq 0 \right\}.$$ 
Furthermore, in order for $H$ to be considered the parity-check matrix of an MDPC code, the row weight $2d$ must be approximately the length of the code, i.e. $2d \approx \sqrt{2r}$. The public key associated with $H$ is its systematic form $H'= \begin{bmatrix} I_r & H_2\end{bmatrix}$ where
$$H_2=H_0^{-1}H_1 \in \F_2^{r \times r}.$$
Notice that because $H'$ is in systematic form, it can be completely described by the smaller matrix $H_2 \in \F_2^{r \times r}$. 

It is convenient to represent the circulant matrices $H_0$ and $H_1$ by polynomials 
\begin{equation} \label{eq:poly_description_of_rows}
  \begin{array}{ccc}
 \mathbf{h}_0:=  \left(h_{0,0}, h_{0,1}, \dots, h_{0,r-1} \right)&\leftrightarrow&
       h_0(x):=\sum_{j=0}^{r-1} h_{0,j}x^j \\ \ \\
  \mathbf{h}_1:=     \left(h_{1,0}, h_{1,1}, \dots, h_{1,r-1} \right)
       &\leftrightarrow&
       h_1(x):=\sum_{j=0}^{r-1} h_{1,j}x^j
\end{array}
\end{equation}
 where
$$h_0(x), h_1(x) \in {\F_2[x]}/{\left< x^r -1 \right>},$$ 
noting that both $h_0(x)$ and $h_1(x)$ have few (in fact $d$) nonzero terms. For $i \in \{0,1\}$, we say that $h_i$ has support $\supp (h_i):=\left\{ j \mid h_{i,j} \neq 0 \right\}$ and weight $\left| h_i \right| := \left| \supp (h_i) \right|$. The parameters suggested for BIKE are given in Table \ref{table:BIKE_params}. 

\begin{table*}[t]
\begin{center}
\begin{tabular}{|l|c|c|c|}
\hline
parameter  & $\lambda=128$ & $\lambda=192$ & $\lambda=256$       \\  \hline
 $r$,   size of circulant  & $12,323$ & $24,659$ & $40,973$       \\  \hline
$d$, row weight of parity-check matrix  & 142 & 206 & 274 \\ \hline
$t$, error vector weight   &  134 & 99 & 264  \\  \hline 
\end{tabular}\\ \ \\
\caption{Suggested BIKE parameters for $\lambda$-bit security.}
 \label{table:BIKE_params}
\end{center}
\end{table*}

Encryption of a message $\mathbf{m} \in \F_2^k$ begins with associating an error vector $\mathbf{m} \in \F_2^t$ of predetermined weight $t$, with $\mathbf{m}$ and then determining the associated syndrome $$H'\mathbf{e}^T$$ as in the approach by Niederreiter \cite{NIEDERREITER_86}. Alternatively, one may compute a generator matrix $G$ of the code associated with $H$ as in \cite{McEliece, NSPZNGD23} and encrypt $\mathbf{m}$ as $\mathbf{m}G+\mathbf{e}$. In either case, the receiver must perform a decoding algorithm to uncover the message $\mathbf{m}$. BIKE uses an iterative decoder, motivated by prior results demonstrating superior performance for LDPC codes when coupled with iterative decoders as explained in Section \ref{section:codes}. Unfortunately, at times, decoding failures occur due to the ways in which local information is utilized to ultimately produce a global decision. 

The keys and algorithms described in this paper focus on the Black-Gray Flip (BGF) Decoder \cite{DGK20}. In a standard bit-flipping decoder, on any given iteration, any bit node with a large enough number of unsatisfied check node neighbors will be flipped, meaning a $0$ is changed to a $1$ or vice versa. The BGF decoder begins with a typical round of bit flipping in which the decoder keeps track of which bits were flipped (black bits) and which bits were ``close'' to being flipped (gray bits). It then runs two additional rounds of bit flipping, checking only the black bits, then the gray bits, to see if they now have large numbers of unsatisfied check node neighbors. Then the BGF decoder runs a small number of standard bit-flipping decoder steps. 

In the next subsection, we will review the notions of weak keys in the literature in preparation for comparing them with the combinatorial structures leading to decoding weaknesses that we identify in Section \ref{section:cycles_weak}. 

\subsection{Weak Keys in BIKE}
\label{section:weak_keys}

The notion of weak keys was introduced in \cite{DGK20} as a way of capturing some decoding failures, particularly those resulting from keys with identifiable structural patterns. In \cite{BIKE22}, they are differentiated from decoding failures resulting from structured errors. They are further studied in e.g. \cite{SV20, NSPZNGD23, WWW23, AYU20}.

The keys in BIKE can be thought of as corresponding to the first row of the parity-check matrix $H$, as described in Expression (\ref{eq:poly_description_of_rows}). The \textit{weak keys} for the BIKE cryptosystem identified in \cite{DGK20,V21} are described as:
\begin{itemize}
    \item Type I: keys with many consecutive nonzero bits,
    \item Type II: keys with nonzero bits at regular intervals,
    \item Type III: keys with many column intersections between the two cyclic blocks.
\end{itemize}
In \cite{ABHLPR22}, the authors examine the iterative decoding performance of QC-MDPC codes for use in the BIKE cryptosystem, focusing on experimental findings on the decoding failure rate. One may note, as in \cite{SV20}, that Type I weak keys are also Type II weak keys. However, they are distinguished in the literature due to observed differences in their impact on the DFR.

There are other notions of weak key associated with BIKE, which we mention here to avoid confusion with those we consider. In \cite{AYU20}, the authors note that when using a scheme in which
$$ H= \begin{bmatrix} H_0 & H_1 \end{bmatrix} $$
is the private key associated with public key 
$$ H'=\begin{bmatrix} I \ H_2 \end{bmatrix}, $$
one may use factorizations of the polynomials $h(x)$ defining the circulant matrix $H_2$ to determine the polynomials associated with $H_0$ and $H_1$. These polynomials are factors of $x^r-1$. However, as pointed out above, $r$ is selected as a large prime such that $x^r-1$ has only two factors over $\F_2$. Hence, the term weak key used here differs from that in \cite{AYU20}. In \cite{Bardet_weak_16}, the authors apply the extended Euclidean algorithm to uncover a private key from a public key. They use the term weak key to mean any pair of private keys that can be recovered using the extended Euclidean algorithm from public data, ultimately taking weak key to mean any $(h, h') \in \left({\F_2[x]}/{\left< x^r -1 \right>}\right)^2$, each having $w$ nonzero terms, with $\deg h+ \deg h' < r$. As stated in \cite{NSPZNGD23}, these types of weak keys are irrelevant to the IND-CCA security. Hence, we do not consider them here. 

In \cite{WWW23}, the authors introduce the concept of the gathering property to define a set of weak keys. For $m \in [r]$ and $\epsilon \in \N$, set
\[K_{m,\epsilon}:= \left\lbrace (\mathbf{h}_0, \mathbf{h}_1) \in \left( {\F_2[x]}/{\left< x^r -1 \right>}
\right)^2 \mid \exists a \in \N \text{ with } \text{wt}(\mathbf{h}_0^{[a,a+m)}) = \text{wt}(\textbf{h}_0) - \epsilon \right\rbrace\]
where the interval $[a,a+m)$ is taken modulo $r$. Loosely speaking, these keys are such that all or most of the nonzero entries of $\mathbf{h}_0$ are contained within a ``relatively'' small portion of the interval from $0$ to $r-1$. In \cite{WWW23}, $\epsilon$ is taken to be $0$ or $1$.\\

\begin{definition} \label{def:gatheringproperty}
A key $(\mathbf{h}_0,\mathbf{h}_1) \in \left({\F_2[x]}/{\left< x^r -1 \right>}\right)^2$ is said to satisfy the $(m,\epsilon)$-gathering property if and only if $(\mathbf{h}_0,\mathbf{h}_1) \in K_{m,\epsilon}$. 
\end{definition}

The notion of a distance spectrum of a key in BIKE was introduced in \cite{GJS16} and further refined in \cite{SV20}. We provide following definition of it and related terms. \\

\begin{definition} \label{def:distance_mult_profile}
    Let $H$ be a circulant matrix and let $h \in \F_2[x]/\langle x^r - 1 \rangle$ be the polynomial associated with its first row.
    \begin{enumerate}
        \item The distance between two positions $i,j \in [r]_0$ in H is given by
        \[ d(i,j) = \min\{ \pm \left( j-i \right)\mod r\}.\]
        \item The multiplicity of a distance is the number of times it appears as the difference of two degrees of nonzero monomials of $h$,
        \[\mu(\delta,h) = \left| \left\lbrace (i,j) \mid d(i,j) = \delta, h_i = h_j = 1 , 0 \leq i \leq j < r \right\rbrace \right|.\]
        \item The spectrum of $h$ is the set of all nonzero distances with their multiplicity,
        \[ \spec(h) = \left\lbrace (\delta, \mu(\delta,h) \mid \delta \in \left\lbrace 1, \dotsc, \left\lfloor r/2 \right\rfloor \right\rbrace \right\rbrace.\]
    \end{enumerate}
\end{definition}

We introduce some additional terms that expand on those in Definition~\ref{def:distance_mult_profile} and will allow us to provide additional insight into weak keys.\\

\begin{definition} \label{def:distance_mult_profile2}
    Let $H$ be a circulant matrix and let $h \in \F_2[x]/\langle x^r -1 \rangle$ be the polynomial associated with its first row.
    \begin{enumerate}
        \item The full spectrum of $h$ is 
        \[ \overline{\spec}(h) = \left[ \delta \mid d(i,j) = \delta, h_i = h_j = 1, 0 \leq i < j < r \right],\]
        the multiset of all nonzero distances between nonzero monomials of $h$,
        \item The full multiplicity of a distance $i$ is the number of times a distance appears in $\overline{\spec}(h)$, or, equivalently,
        \[ \gamma(i,h) = \left| \left\lbrace (\delta, \mu(\delta,h)) \mid \mu(\delta,h) = i\right\rbrace \right|.\]
        In other words, this is the multiplicity of a multiplicity of a distance.
        \item The distance multiplicity spectrum of $h$ with $\text{wt}(h) = d$ is
        \[ \text{mSpec}(h) = \left\lbrace (i, \gamma(i,h)) \mid i \in [d]_0\right\rbrace.\]
    \end{enumerate}
\end{definition}

Notice that in Definition~\ref{def:distance_mult_profile2}, $\gamma(i,h) \leq d-1$, because the number of times a distance can appear is upper bounded by $d-1$. Furthermore, $\gamma(i,h) = d-1$ exactly when there exists $\ell \in \N$ with $d(i,j) = \ell$ for all sequential nonzero indices of $h$.

We illustrate Definitions~\ref{def:distance_mult_profile} and \ref{def:distance_mult_profile2} in the next example.\\

\begin{example}
    Let $h = 1 + x + x^8 + x^9 \in \F_2[x]/\langle x^{11} - 1 \rangle$. The associated vector representation of $h$ is $\mathbf{h} = (1, 1, 0, 0, 0, 0, 0, 0, 1, 1, 0)$, and the corresponding circulant parity-check matrix $H$ is given by
    \[ H = \begin{bmatrix} 1 & 1 & 0 & 0 & 0 & 0 & 0 & 0 & 1 & 1 & 0 \\
                           0 & 1 & 1 & 0 & 0 & 0 & 0 & 0 & 0 & 1 & 1 \\
                           1 & 0 & 1 & 1 & 0 & 0 & 0 & 0 & 0 & 0 & 1 \\
                           1 & 1 & 0 & 1 & 1 & 0 & 0 & 0 & 0 & 0 & 0 \\
                           0 & 1 & 1 & 0 & 1 & 1 & 0 & 0 & 0 & 0 & 0 \\
                           0 & 0 & 1 & 1 & 0 & 1 & 1 & 0 & 0 & 0 & 0 \\
                           0 & 0 & 0 & 1 & 1 & 0 & 1 & 1 & 0 & 0 & 0 \\
                           0 & 0 & 0 & 0 & 1 & 1 & 0 & 1 & 1 & 0 & 0 \\
                           0 & 0 & 0 & 0 & 0 & 1 & 1 & 0 & 1 & 1 & 0 \\
                           0 & 0 & 0 & 0 & 0 & 0 & 1 & 1 & 0 & 1 & 1 \\
                           1 & 0 & 0 & 0 & 0 & 0 & 0 & 1 & 1 & 0 & 1 
    \end{bmatrix}.\]
    We compute $d(i,j)$ for all pairs $i,j \in \{0, 1, 8, 9\}$:
    \[ d(0,1) = 1, \,\, d(0,8) = 3, \,\, d(0,9) = 2, \,\, d(1,8) = 4, \,\, d(1,9) = 3, \,\, d(8,9) = 1\]
    and use these to determine the multiplicities of each distance $\delta \in \{1, 2, 3, 4, 5\}$:
    \[\mu(1,h) = 2, \quad \mu(2,h) = 1, \quad \mu(3,h) = 2, \quad \mu(4,h) = 1, \quad \mu(5,h) = 0.\]
    Hence, 
    \[ \spec(h) = \{(1,2), (2,1), (3,2), (4,1), (5,0)\}.\]
    The full spectrum of $h$ is the multiset of nonzero distances
    \[ \overline{\spec}(h) = \left[ 1, 1, 2, 3, 3, 4\right].\]
    The distance multiplicities of each distance are 
    \[ \gamma(1,h) = 2, \quad \gamma(2,h) = 2, \quad \gamma(3,h) = 0.\]
    And finally, the distance multiplicity spectrum of $h$ is
    \[ \text{mSpec}(h) = \{(1,2), (2,2), (3,0)\}.\]
\end{example}

According to \cite{SV20}, Type I weak keys have several multiplicities of high frequency. While there are fewer Type I weak keys than Type II weak keys, they have a more adverse impact on DFR than Type II weak keys with a high frequency multiplicity for a particular distance.

In the next section, we will develop the concepts needed to provide a new perspective on the notion of weak keys.

\section{Cycles and weak keys}
\label{section:cycles_weak}

In this section, we will use combinatorial methods to provide an analysis of weak keys, with a particular connection to $4$-cycles in the associated code's Tanner graph.  In Section~\ref{sec:bikecycles}, we look specifically at the cycle structure of weak keys in the BIKE cryptosystem. In Section~\ref{sec:generalcycles}, we extend these results to QC-MDPC codes more generally.

We begin with a definition to be used in this section's results.\\

\begin{definition} \label{def:col_intersection}
    Let $H, H' \in \F_2^{r \times r}$.
    \begin{enumerate}
        \item The column intersection of columns $i$ and $j$ is 
        \[ \text{CI}(i,j) = \left| \supp \col_i H \cap \supp \col_j H \right|,\]
the number of indices in which column $i$ and column $j$ are both equal to $1$. 
        \item The maximum column intersection between $H$ and $H'$ is 
        \[ \text{MCI}(H,H') = \max_{i,j \in [r]_0} \left| \supp \col_i H \cap \supp \col_j H' \right|,\]
        the maximum intersection between any column in $H$ with any column in $H'$.
    \end{enumerate}
\end{definition}

The correspondence between a polynomial $h \in \F_2[x] / \langle x^r - 1 \rangle$ and its corresponding circulant parity-check matrix $H \in \F_2^{r \times r}$ is defined via row shifts\textemdash the first row corresponds to indices of the nonzero coefficients of $h$, and each row thereafter in $H$ is a shift of the row above by one index to the right. Most of the results in this section focus on columns and column intersections, not rows and row intersections. For this reason, we begin with Remark~\ref{rem:rowcoleq}, which explains the correspondence between row and columns of a circulant matrix $H$.\\

\begin{remark} \label{rem:rowcoleq}
    Given some $h \in \F_2[x]/\langle x^r - 1 \rangle$ and associated circulant parity-check matrix $H$, there is a correspondence between the columns of $H$ and the rows of $H$. In particular, the row $\row_iH$ is the same as column $\mathbf{c}_{r - 1 - i}^{-1}$, where the subscript is taken modulo $r$ and the ${}^{-1}$ corresponds to inverting the order of the entries of the vector. To see this, notice that the circulant of each $1$ is at a diagonal, and so the matrix $H$ is symmetric about its opposite diagonal. Hence, there is a distance-preserving correspondence between the columns and rows of the circulant parity-check matrix $H$. 
\end{remark}

\subsection{BIKE} \label{sec:bikecycles}

We begin by characterizing the column intersections of a single circulant matrix.\\

\begin{proposition} \label{prop:maxcolint}
    Let $h \in \F_2[x]/ \langle x^r - 1 \rangle$ and $H \in \F_2^{r \times r}$ be the circulant parity-check matrix corresponding to $h$. Then, for $i,j \in [r]_0$,
    \[ CI(i,j) = \mu(d(i,j),h).\]
    In particular,
    \[ MCI(H) = \max_{i,j \in [r]_0} \{\mu(d(i,j),h)\} = \max_{j  \in [r]_0} \{\mu(d(0,j),h)\}.\]
\end{proposition}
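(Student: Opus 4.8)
The plan is to work directly with the columns of the circulant matrix $H$ via its explicit entry formula, translate the column-intersection count into a shift-count on the coefficient vector of $h$, and then match that shift-count to the distance multiplicity $\mu$ through a bijection. First I would record the circulant entry formula: since each row of $H$ is the right cyclic shift of the one above and the first row is $(h_0,\dots,h_{r-1})$, the entry in row $k$ and column $\ell$ is $(H)_{k,\ell}=h_{(\ell-k)\bmod r}$. Consequently $\supp\col_j H=\{k\in[r]_0 : h_{(j-k)\bmod r}=1\}$, a shifted copy of the support of $h$.

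Next, to compute $CI(i,j)=|\supp\col_i H\cap\supp\col_j H|$, I would count the indices $k$ for which both $h_{(i-k)\bmod r}=1$ and $h_{(j-k)\bmod r}=1$. Substituting $m=(i-k)\bmod r$ (a bijection of $[r]_0$) and writing $\delta':=(j-i)\bmod r$, the condition $h_{(j-k)\bmod r}=1$ becomes $h_{(m+\delta')\bmod r}=1$, since $j-k\equiv\delta'+m\pmod r$. Hence
\[ CI(i,j)=\bigl|\{\, m\in[r]_0 : h_m=h_{(m+\delta')\bmod r}=1 \,\}\bigr|;\]
in words, $CI(i,j)$ counts the nonzero positions of $h$ whose shift by $\delta'$ is again a nonzero position.

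Then I would relate this ordered shift-count to $\mu(d(i,j),h)$, which counts unordered pairs $0\le a\le b<r$ with $h_a=h_b=1$ at distance $\delta:=d(i,j)=\min(\delta',r-\delta')$. The key is the map $m\mapsto\{m,(m+\delta')\bmod r\}$ from the shift-count set to these unordered pairs. I would treat the two cases $\delta'\le r/2$ (where $\delta=\delta'$) and $\delta'>r/2$ (where $m+\delta'\equiv m-\delta$, after reindexing by $m-\delta$), checking that both produce the same count. In each case a distance-$\delta$ pair $\{a,b\}$ is hit by exactly one $m$, namely $a$ or $b$ according to whether $b-a$ equals $\delta$ or $r-\delta$; injectivity holds because $0<\delta<r/2$ forces $2\delta\not\equiv0\pmod r$ (with $r$ prime, hence odd, this is automatic, and the degenerate $i=j$ case gives $\delta=0$ with both sides equal to $\text{wt}(h)$). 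This yields $CI(i,j)=\mu(d(i,j),h)$.

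For the displayed maxima, the identity just proved gives $MCI(H)=\max_{i,j}CI(i,j)=\max_{i,j}\mu(d(i,j),h)$. Since $d(i,j)$ depends only on $(j-i)\bmod r$, namely $d(i,j)=d(0,(j-i)\bmod r)$, the set of values $\{d(i,j):i,j\in[r]_0\}$ coincides with $\{d(0,j):j\in[r]_0\}$, so the two maxima agree and the second equality follows. I expect the bijection in the third paragraph to be the main obstacle: reconciling the ordered shift-count $(m,m+\delta')$ with the unordered convention $0\le a\le b<r$ in the definition of $\mu$, and verifying that the $\delta'$ versus $r-\delta'$ sign cases are each counted once and only once. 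Everything else is routine bookkeeping with the circulant structure.
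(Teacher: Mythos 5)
Your proof is correct and takes essentially the same approach as the paper: both translate the intersection count into coincidences between the support of $h$ and a cyclic shift of itself by $(j-i) \bmod r$, and then match these to distance-$d(i,j)$ pairs in $\mathrm{Spec}(h)$. The only cosmetic difference is that the paper argues on rows first and transfers to columns via the circulant row--column symmetry (its Remark on $\row_i H$ versus reversed columns), whereas you work directly with the column entry formula $(H)_{k,\ell}=h_{(\ell-k)\bmod r}$; your explicit handling of the $\delta'$ versus $r-\delta'$ cases and the $2\delta'\not\equiv 0 \pmod r$ injectivity point is, if anything, slightly more careful than the paper's.
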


\begin{proof}
Consider $\row_iH$ and $\row_jH$ with $i < j$. Let $\ell$ be any index such that $(H)_{i,\ell} = (H)_{j,\ell} = 1$ (i.e. any row intersection). Because $j > i$, $\row_jH$ is a cyclic shift of $\row_iH$ by $(j-i) \mod r$ positions, and so $(H)_{i,\ell - (j-i)}=1$ as well. But $\row_iH$ is a cyclic shift of row $\row_0H$ by $i$ positions, and so  $(H)_{0,\ell - j} =  (H)_{0,\ell - i} = 1$ as well. Hence positions $\ell - j < \ell - i$ are nonzero in $\row_0H$. Note that
    \begin{align*}
        d(\ell-j,\ell-i) &= \min\{ (\ell - i) - (\ell - j) \mod r, (\ell - j) - (\ell - j) \mod r \} \\
        &= \min\{j - i \mod r, i - j \mod r\} \\
        &= d(i,j).
    \end{align*}
    Hence, each pair contributing to the row intersection of  $\row_iH$ and $\row_jH$ comes from some pair of nonzero coordinates of $\row_0H$ with distance $d(i,j)$. This argument also works in the backward direction and shows that the row intersection of rows $\row_iH$ and $\row_jH$ is $\mu(d(i,j),h)$. By Remark~\ref{rem:rowcoleq}, an equivalent statement holds for columns, and so $CI(i,j) = \mu(d(i,j),h)$. 
Then
    $MCI(H) = \max_{i,j \in [r]_0} \{ \mu(d(i,j),h) \}$  holds by definition. Now note that $\col_{i+k}H$ is a shift by $k$ (mod $r$) of $\col_iH$. In particular, comparing $\col_0H$ to some $\col_kH$ (i.e. computing the max column intersection of $\col_0H$) is equivalent to comparing $\col_iH$ to $\col_{i+k}H$. In other words, it suffices to only calculate column intersections between $\col_0H$ and columns $\col_iH$ with $i \in \left[ \lfloor r/2 \rfloor \right]$, so $MCI(H) = \max_{j \in [r]_0} \{ \mu(d(0,j),h)\}$.
\end{proof}

Proposition~\ref{prop:maxcolint} demonstrates that all columns of a circulant matrix $H$ have the same maximum column intersection between the other columns in $H$. Therefore, finding the maximum column intersection of one column is sufficient to find the column intersection numbers of all other columns of a circulant matrix $H$.

In the next result, we provide a formula for counting $4$-cycles within a single circulant matrix. Notice that given a parity-check matrix $H \in \F_2^{r \times r}$, a $4$-cycle in the Tanner graph $T(H)$ arises precisely when there exist $i, j \in [r]$, $i \neq j$, with $\mid \supp \left( \row_iH \right) \cap  \supp \left( \row_jH \right) \mid \geq 2$.\\

\begin{lemma} \label{lem:4cycles}
    Let $h \in \F_2[x]/\langle x^r - 1 \rangle$ and $H \in \F_2^{r \times r}$ the circulant parity-check matrix corresponding to $h$. The number of $4$-cycles in the Tanner graph of $H$ is
    \begin{equation} \label{eq:4cycles}
        r \cdot \sum_{\delta=1}^{\left\lfloor r/2 \right\rfloor} \binom{\mu(\delta,h)}{2}.
    \end{equation}
\end{lemma}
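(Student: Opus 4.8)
The plan is to count $4$-cycles as $2\times 2$ all-ones submatrices of $H$ and then collect their contributions according to the distance between the two rows (equivalently columns) involved. Recall from the discussion preceding the lemma that a $4$-cycle in $T(H)$ corresponds to a choice of two distinct check nodes $i\neq j$ together with two distinct variable nodes lying in $\supp(\row_i H)\cap \supp(\row_j H)$; that is, a $4$-cycle is precisely a $2\times 2$ all-ones submatrix of $H$, determined by an unordered pair of rows and an unordered pair of columns in which both rows are supported. Since each unordered pair of rows $\{i,j\}$ with $|\supp(\row_i H)\cap \supp(\row_j H)| = m$ yields exactly $\binom{m}{2}$ such submatrices, and distinct pairs of rows give distinct cycles, the total number of $4$-cycles is
\[
\sum_{0\le i<j<r}\binom{\,|\supp(\row_i H)\cap \supp(\row_j H)|\,}{2}.
\]

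First I would rewrite each row intersection in terms of the multiplicity function. The proof of Proposition~\ref{prop:maxcolint} shows directly that the intersection of $\row_i H$ and $\row_j H$ equals $\mu(d(i,j),h)$ (and by Remark~\ref{rem:rowcoleq} the row and column counts agree), so the sum becomes $\sum_{0\le i<j<r}\binom{\mu(d(i,j),h)}{2}$. Next I would partition the set of unordered pairs according to the value $\delta = d(i,j)\in\{1,\dots,\lfloor r/2\rfloor\}$, which is legitimate because the summand $\binom{\mu(d(i,j),h)}{2}$ depends on the pair only through $\delta$. This gives
\[
\sum_{\delta=1}^{\lfloor r/2\rfloor} N_\delta\binom{\mu(\delta,h)}{2},
\qquad
N_\delta := \bigl|\{\{i,j\}\subseteq[r]_0 : d(i,j)=\delta\}\bigr|.
\]

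The crux is to show that $N_\delta = r$ for every $\delta$ in the range. For fixed $\delta$ with $1\le \delta\le \lfloor r/2\rfloor$, the ordered pairs $(i,j)$ with $d(i,j)=\delta$ are exactly those with $j-i\equiv \pm\delta \pmod r$; each of the two congruences has exactly $r$ solutions, and because $r$ is odd (indeed prime in BIKE) we have $\delta \not\equiv -\delta \pmod r$, so the two families are disjoint and contribute $2r$ ordered pairs, hence $N_\delta = r$ unordered pairs. Substituting $N_\delta = r$ yields the claimed formula $r\sum_{\delta=1}^{\lfloor r/2\rfloor}\binom{\mu(\delta,h)}{2}$. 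I expect this counting step to be the only real subtlety: the uniform factor of $r$ hinges on each distance being realized by exactly $r$ pairs, which fails in the degenerate case $\delta = r/2$ and is precisely why the oddness of $r$ matters. As a sanity check, summing $N_\delta = r$ over the $\lfloor r/2\rfloor = (r-1)/2$ admissible distances recovers all $\binom{r}{2}$ unordered pairs, confirming that the grouping is a genuine partition.
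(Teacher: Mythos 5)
Your proposal is correct, and it reaches the formula by a genuinely different route than the paper. The paper's proof is an orbit argument: it shifts every $4$-cycle so that it touches $\row_0H$, counts the cycles through row $0$ via Proposition~\ref{prop:maxcolint}, identifies the cycles between rows $0$ and $i$ with those between rows $0$ and $r-i$ so that each shift-orbit is counted exactly once (this is why its sum runs only over $\delta\le\lfloor r/2\rfloor$), and then multiplies by the orbit size $r$. You instead perform a direct global count, $\sum_{0\le i<j<r}\binom{\mu(d(i,j),h)}{2}$, and partition the $\binom{r}{2}$ unordered pairs of rows by their distance, reducing everything to the identity $N_\delta=r$. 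Your version buys transparency: it sidesteps the somewhat delicate ``same orbit, count once'' bookkeeping, and it isolates exactly where the hypothesis that $r$ is odd enters --- namely that $\delta\not\equiv-\delta\pmod r$, so each distance class has exactly $r$ (rather than $r/2$) pairs. That is a point worth making explicit, since the lemma as stated for arbitrary $r$ would overcount the $\delta=r/2$ term when $r$ is even; both proofs tacitly rely on $r$ being odd (prime, in BIKE). What the paper's approach buys in exchange is a reusable template: the ``fix one row/column and multiply by $r$'' reduction is exactly the mechanism redeployed in Lemma~\ref{lem:4cycles2} and Proposition~\ref{prop:typed}, and it is what the remark following Lemma~\ref{lem:4cycles2} about the differing index ranges is explaining.
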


\begin{proof}
    Let $\row_iH = (w_{i,1}, \dotsc, w_{i,r})$. Notice that given row $\row_nH$, all subscripts of $\row_mH$ have been shifted from $\row_nH$ by $m-n \mod r$ positions, i.e. $w_{m,i} = w_{n, i+(m-n)}$, all subscripts taken $\mod r$.
    
    We claim that it suffices to count the $4$-cycles with some entries in row $0$ to find all $4$-cycles in $H$. Let $j$ and $k$ be such that $j \neq k$ and $w_{0,j} = w_{0,k} = 1$ and $\ell \in [r-1]$ be such that $w_{\ell, j} = w_{\ell, k} = 1$ (i.e. the entries $w_{0,j}, w_{0,k}, w_{\ell, j}$, and $w_{\ell, k}$ give rise to the edges of a $4$-cycle in the Tanner graph of $H$). Then, for any $m$, the entries $w_{m,j+m}, w_{m,k+m}, w_{\ell+m,j+m}$, and $w_{\ell + m, k + m}$ also form a $4$-cycle. Because $m \in \{0, \dotsc, r-1\}$ are all then contributing to a $4$-cycle, for each $4$-cycle with entries in $\col_0H$, there are $r$ such $4$-cycles in $H$. In the other direction, if $w_{n,j}, w_{n,k}, w_{m,j}$, and $w_{m,k}$ form the edges of a $4$-cycle, then so do $w_{0,j-n}, w_{0,k-n}, w_{m-n,j-n}$, and $w_{m-n,k-n}$, and hence they were counted when counting those from the first row. 

    Now we count the $4$-cycles with entries in $\row_0H$. By Proposition~\ref{prop:maxcolint}, the row intersection between $\row_0H$ and $\row_iH$ is $\mu(d(0,i),h)$, and so there are $\mu(d(0,i),h)$ pairs of ones in common between $\row_0H$ and $\row_iH$. Selecting any two of these gives a unique $4$-cycle between the two rows. So the number of $4$-cycles between $\row_0H$ and $\row_iH$ is $\binom{\mu(d(0,i),h)}{2} = \binom{\mu(i,h)}{2}$. We will now show that the $4$-cycles between $\row_0H$ and $\mathbf{w}_{r-i}$ are the same as those between $\row_0H$ and $\row_iH$, and so they should only be counted once. To see this, suppose $w_{0,k}$, $w_{0,\ell}$, $w_{r-i,k}$, and $w_{r-i,\ell}$ correspond to vertices of a $4$-cycle. Then taking cyclic shifts, $w_{i,k+i}$, $w_{i,\ell + i}$, $w_{0,k+i}$, and $w_{0,\ell + i}$ also correspond to a $4$-cycle between rows $\row_0H$ and $\row_iH$. 

    Hence, the total number of $4$-cycles with entries in $\row_0H$ is $\sum_{\delta=1}^{\lfloor r/2 \rfloor} \binom{\mu(\delta,h)}{2}$. By the earlier argument in this proof, this number multiplied by $r$ gives all $4$-cycles in the Tanner graph. The result follows.
\end{proof}

We next give a reformulation of the equation given in Lemma~\ref{lem:4cycles} in terms of the full multiplicities (see Definition~\ref{def:distance_mult_profile2}) which give the number of times a given distance multiplicity appears in $\spec(h)$. \\

\begin{corollary} \label{rem:4cycleequiv}
Given $h \in \F_2[x]/\langle x^r - 1 \rangle$ and $H \in \F_2^{r \times r}$ the circulant parity-check matrix corresponding to $h$, the number of $4$-cycles in the Tanner graph of $H$ is
    \[ r \cdot \sum_{i=2}^{d-1} \gamma(i,h) \binom{i}{2}.\]
\end{corollary}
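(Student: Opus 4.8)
The plan is to derive this directly from Lemma~\ref{lem:4cycles} by reorganizing its sum according to the value of the multiplicity rather than the distance itself. Lemma~\ref{lem:4cycles} expresses the number of $4$-cycles as $r \cdot \sum_{\delta=1}^{\lfloor r/2 \rfloor} \binom{\mu(\delta,h)}{2}$, a sum indexed by the distances $\delta$. Since the summand $\binom{\mu(\delta,h)}{2}$ depends on $\delta$ only through the integer $\mu(\delta,h)$, I would partition the index set $\{1, \dots, \lfloor r/2 \rfloor\}$ into fibers according to the common value $i = \mu(\delta,h)$ and sum over these fibers instead.

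The key step is to identify the size of each fiber with the full multiplicity. By Definition~\ref{def:distance_mult_profile2}, $\gamma(i,h)$ is precisely the number of distances $\delta \in \{1, \dots, \lfloor r/2 \rfloor\}$ whose multiplicity $\mu(\delta,h)$ equals $i$ --- the multiplicity of a multiplicity of a distance. Hence each fiber labeled $i$ contains exactly $\gamma(i,h)$ distances, and every distance in that fiber contributes the same value $\binom{i}{2}$. Interchanging the order of summation then yields
\[ \sum_{\delta=1}^{\lfloor r/2 \rfloor} \binom{\mu(\delta,h)}{2} = \sum_{i} \gamma(i,h) \binom{i}{2}, \]
after which multiplying by $r$ gives the claimed expression.

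It remains only to pin down the range of the reindexed sum, which is the one delicate bookkeeping point. The terms with $i = 0$ and $i = 1$ contribute nothing, since $\binom{0}{2} = \binom{1}{2} = 0$, so the lower limit may be taken to be $i = 2$; this in particular absorbs all distances of multiplicity zero. For the upper limit, I would invoke the observation recorded after Definition~\ref{def:distance_mult_profile2} that no distance can occur with multiplicity exceeding $d-1$, so that $\gamma(i,h) = 0$ for $i > d-1$ and the sum terminates at $i = d-1$. The substance of the argument is the purely combinatorial reindexing by fibers, which is immediate once the fiber sizes are recognized as the full multiplicities $\gamma(i,h)$; handling the endpoints is the only step requiring genuine care, and it is settled by the two elementary facts just noted.
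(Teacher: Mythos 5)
Your proposal is correct and follows essentially the same route as the paper: both reindex the sum of Lemma~\ref{lem:4cycles} by the common multiplicity value $i=\mu(\delta,h)$, identify the fiber sizes with $\gamma(i,h)$ from Definition~\ref{def:distance_mult_profile2}, and justify the limits $i=2$ to $i=d-1$ via $\binom{0}{2}=\binom{1}{2}=0$ and the fact that no multiplicity exceeds $d-1$. Your write-up is in fact more explicit about the fiber decomposition than the paper's brief argument.
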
 
    
\begin{proof}
    Notice that if some multiplicity $\mu \in [d]_0$ appears $i$ times in (\ref{eq:4cycles}), in the sum, this multiplicity contributes $i \binom{\mu}{2}$ $4$-cycles. Further notice that $\mu$ cannot be more than $d-1$, as no two rows in $H$ can be identical. Also, because $\binom{0}{2} = \binom{1}{2} = 0$, we can begin the sum at $i=2$.
\end{proof}

Proposition~\ref{prop:maxcolint} and Lemma~\ref{lem:4cycles} consider the column intersections and $4$-cycles within a single circulant matrix in a BIKE private key. The next step is to consider the column intersections and $4$-cycles between the two circulant matrices that make up a private key. Proposition~\ref{prop:maxcolint2} gives bounds on the maximum column intersection based on the full spectrum of $h_0$ and $h_1$. Lemma~\ref{lem:4cycles2} immediately follows and provides an exact value for the maximum column intersection as well as way to count the number of $4$-cycles between the two circulants.\\

\begin{proposition} \label{prop:maxcolint2}
    Let $h_0, h_1 \in \F_2[x]/\langle x^r-1 \rangle$ and let $H_0, H_1 \in \F_2^{r \times r}$ be the circulant parity-check matrices corresponding to $h_0, h_1$, respectively. The maximum column intersection between  $H_0$ and  $H_1$ satisfies the following:
    \begin{enumerate}
        \item if $\overline{\spec}(h_0) \cap \overline{\spec}(h_1) \neq \varnothing$, $$2 \leq MCI(H_0, H_1)  \leq 2 \cdot \left| \overline{\spec}(h_0) \cap \overline{\spec}(h_1) \right|.$$
        \item $MCI(H_0, H_1) = 1$ if and only if $\overline{\spec}(h_0) \cap \overline{\spec}(h_1) = \varnothing$.
    \end{enumerate}
\end{proposition}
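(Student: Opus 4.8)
The plan is to mirror the single-block analysis of Proposition~\ref{prop:maxcolint}, replacing the within-block distance count by a count of common differences between the two supports $S_0 := \supp(h_0)$ and $S_1 := \supp(h_1)$. First I would record the column description: since $\row_i H_t$ is the right cyclic shift of $\row_0 H_t$ by $i$ positions, $(H_t)_{i,\ell} = 1$ exactly when $\ell - i \in S_t \pmod r$, so that $\supp \col_\ell H_t = \ell - S_t \pmod r$. An index then lies in $\supp \col_\ell H_0 \cap \supp \col_m H_1$ precisely when it equals $\ell - a = m - b$ for some $a \in S_0$, $b \in S_1$ with $a - b \equiv \ell - m$; hence
\[ CI(\ell,m) = N(\ell - m), \qquad N(\delta) := \bigl| \{(a,b) \in S_0 \times S_1 : a - b \equiv \delta \!\!\pmod r\} \bigr|, \]
and therefore $MCI(H_0,H_1) = \max_{\delta \in [r]_0} N(\delta)$. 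Because $\sum_\delta N(\delta) = |S_0|\,|S_1| = d^2 \ge 1$, we always have $MCI(H_0,H_1) \ge 1$.

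Next I would prove statement (2) by showing $MCI(H_0,H_1) \ge 2$ if and only if $\overline{\spec}(h_0) \cap \overline{\spec}(h_1) \neq \varnothing$. If some $N(\delta^*) \ge 2$, choose two distinct pairs $(a,b), (a',b')$ with $a - b \equiv a' - b' \equiv \delta^*$; distinctness forces $a \neq a'$ and $b \neq b'$, and subtracting gives $a - a' \equiv b - b' \not\equiv 0$, so $d(a,a') = d(b,b')$ is a common nonzero distance. Conversely, a common distance $\gamma$ provides $a,a' \in S_0$ and $b,b' \in S_1$ with $a - a' \equiv \pm\gamma \equiv b - b'$; after relabeling within one pair so the two signs agree, $(a,b)$ and $(a',b')$ share the difference $a - b \equiv a' - b'$, so $N(a-b) \ge 2$. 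Together with $MCI \ge 1$ this yields $MCI = 1 \iff \overline{\spec}(h_0) \cap \overline{\spec}(h_1) = \varnothing$, and it simultaneously supplies the lower bound $MCI \ge 2$ of statement (1).

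For the upper bound in statement (1), suppose $MCI(H_0,H_1) = N(\delta^*) = k$. The $k$ realizing pairs have distinct first coordinates $A = \{a_1,\dots,a_k\} \subseteq S_0$ and distinct second coordinates $B = \{b_1,\dots,b_k\} \subseteq S_1$, and $a_i - b_i \equiv \delta^*$ for all $i$ means $B = A - \delta^*$ is a translate of $A$. Since $d(\cdot,\cdot)$ depends only on the difference, $d(a_i,a_j) = d(b_i,b_j)$ for every pair, so the multiset $D$ of the $\binom{k}{2}$ pairwise distances of $A$ equals that of $B$. As $A \subseteq S_0$ and $B \subseteq S_1$, this $D$ is a sub-multiset of both $\overline{\spec}(h_0)$ and $\overline{\spec}(h_1)$, hence of their multiset intersection, giving $\binom{k}{2} = |D| \le |\overline{\spec}(h_0) \cap \overline{\spec}(h_1)|$. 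Since $k \ge 2$, $k \le k(k-1) = 2\binom{k}{2} \le 2\,|\overline{\spec}(h_0) \cap \overline{\spec}(h_1)|$, as claimed. The step I expect to require the most care is this last multiset bookkeeping: one must verify that the translate identity makes the two pairwise-distance multisets literally equal and that a common sub-multiset embeds in the intersection (where multiplicities are taken as minima), so that the $\binom{k}{2}$ instances---counted with multiplicity, not merely as distinct distance values---are all charged to the intersection.
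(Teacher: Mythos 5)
Your proof is correct and follows essentially the same route as the paper: both arguments reduce column intersections of $H_0$ and $H_1$ to counting common differences between the supports, and then translate aligned pairs into common entries of $\overline{\spec}(h_0)$ and $\overline{\spec}(h_1)$. Your version is in fact tighter where the paper is informal\textemdash the identity $CI(\ell,m)=N(\ell-m)$ and the chain $k \le 2\binom{k}{2} \le 2\,|\overline{\spec}(h_0)\cap\overline{\spec}(h_1)|$ make rigorous the paper's ``in the most extreme case'' justification of the upper bound, and your sign-relabeling step in part (2) handles a detail the paper glosses over.
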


\begin{proof}
    Suppose  $\delta \in \overline{\spec}(h_0) \cap \overline{\spec}(h_1)$. Then there are pairs $h_{0,i}$, $h_{0,j}$ and 
    $h_{1,i'}$, $h_{1,j'}$
    with $d(h_{0,i},h_{0,j}) = d(h_{1,i'},h_{1,j'}) = \delta$. Taking a cyclic shift $h_0'$ of $h_0$ by $i' - i \mod r$ gives nonzero entries $h_{0,i+(i'-i)}' = h_{0,i'}'$ and $h_{0,j + (i'-i)}' = h_{0,j'}$. So $h_0'$ and $h_1$ have a column intersection of at least two.

    By the same argument, for any $\delta \in \overline{\spec}(h_0) \cap \overline{\spec}(h_1)$, an intersection of size two between a shift of $h_0$ and $h_1$ (or the reverse) can result in two indices of intersection between the columns. In the most extreme case, each pair of distance $\delta$ could result in two indices of intersection, proving 1.

    If $\overline{\spec}(h_0) \cap \overline{\spec}(h_1) = \varnothing$, then  $MCI(H_0,H_1) \leq 1$. As long as $h_0$ and $h_1$ have at least one nonzero index, the intersection is certainly nonzero, proving 2.
\end{proof}

\begin{lemma} \label{lem:4cycles2}
    Let $h_0, h_1 \in \F_2[x]/\langle x^r - 1\rangle$ and let $H_0, H_1 \in \F_2^{r \times r}$ be the circulant parity-check matrices corresponding to $h_0$, $h_1$, respectively. Define 
    \[ c_i := \left| \supp (h_0) \cap \supp (h_1^{i}) \right| \quad \text{for } i \in [r]_0.\]
    Then
    \begin{enumerate}
        \item $MCI(H_0, H_1) = \max_i c_i$ and
        \item the number of $4$-cycles between $H_0$ and $H_1$ is
        \[ r \cdot \sum_{i=0}^{r-1} \binom{c_i}{2}.\]
    \end{enumerate}
\end{lemma}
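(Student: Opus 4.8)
The plan is to count the inter-block $4$-cycles by the same ``fix a representative and exploit the cyclic shift'' strategy used in Lemma~\ref{lem:4cycles}, but taking advantage of the fact that a $4$-cycle \emph{between} the two blocks uses exactly one variable node from the $H_0$ block and one from the $H_1$ block. Concretely, in the Tanner graph of $H=[H_0\ H_1]$ such a $4$-cycle is determined by a column $\col_a H_0$, a column $\col_b H_1$, and a pair of rows (check nodes) $s\neq t$ with $(H_0)_{s,a}=(H_0)_{t,a}=(H_1)_{s,b}=(H_1)_{t,b}=1$. Hence for each ordered pair $(a,b)$ the number of inter-block $4$-cycles carried by those two columns is $\binom{|\supp\col_a H_0\cap\supp\col_b H_1|}{2}$, and because the two variable nodes lie in distinct blocks there is no overcounting of the kind that forced the $i\leftrightarrow r-i$ correction in Lemma~\ref{lem:4cycles}: every inter-block $4$-cycle is recorded exactly once.

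The first key step is to show that the column intersection $|\supp\col_a H_0\cap\supp\col_b H_1|$ depends only on the difference $b-a\bmod r$ and, up to a harmless reindexing, equals $c_{b-a}$. Writing $(H_0)_{s,a}=(h_0)_{a-s}$ and $(H_1)_{s,b}=(h_1)_{b-s}$ (indices mod $r$), a shared row $s$ corresponds to an index $p=a-s\in\supp(h_0)$ with $p+(b-a)\in\supp(h_1)$; thus the intersection size is $|\supp(h_0)\cap(\supp(h_1)-(b-a))|$, which depends on $a,b$ only through $b-a$. Since $\supp(h_1^i)=\supp(h_1)+i$, this equals $c_{b-a}$ (or $c_{-(b-a)}$ depending on orientation); the sign convention is immaterial below because both conclusions range over all residues $i\in[r]_0$. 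Equivalently, one may route this identification through the distance-preserving row/column correspondence of Remark~\ref{rem:rowcoleq} together with Proposition~\ref{prop:maxcolint}.

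Part (1) then follows immediately: $MCI(H_0,H_1)$ is by definition the maximum of these column intersections over all $(a,b)$, and since the value depends only on $b-a$ and realizes every $c_i$ as $i$ ranges over $[r]_0$, we obtain $MCI(H_0,H_1)=\max_i c_i$, consistent with the bounds of Proposition~\ref{prop:maxcolint2}. For part (2), I sum the per-pair count over all $a,b\in[r]_0$ and group the $r^2$ ordered pairs by their difference $k=b-a$: each of the $r$ residues $k$ arises from exactly $r$ pairs $(a,a+k)$, all with the same intersection value, so
\[ \sum_{a=0}^{r-1}\sum_{b=0}^{r-1}\binom{|\supp\col_a H_0\cap\supp\col_b H_1|}{2} = \sum_{k=0}^{r-1} r\binom{c_k}{2} = r\sum_{i=0}^{r-1}\binom{c_i}{2}. \]

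I expect the main obstacle to be the bookkeeping in the first step: pinning down that the column intersection is a function of $b-a$ alone and matches $c_{b-a}$ under the cyclic-shift indexing, including the orientation convention inherited from Remark~\ref{rem:rowcoleq}. Once that identification is secure, both the maximum (part 1) and the summation-by-difference (part 2) are routine, and the absence of intra-block double counting makes the final tally cleaner than in Lemma~\ref{lem:4cycles}.
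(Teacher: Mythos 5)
Your proposal is correct and follows essentially the same route as the paper's proof: both reduce every column-pair intersection to a single representative via the cyclic-shift structure (the paper fixes the last column of $H_0$ and multiplies by $r$; you sum over all $r^2$ ordered pairs and group by the index difference, which is the same count), and both then observe that each inter-block $4$-cycle is determined by one column from each block and a $2$-subset of their common support rows, so no overcounting correction is needed. The only cosmetic difference is that you verify the identification with $c_{b-a}$ by writing out the circulant entries $(H_0)_{s,a}=(h_0)_{a-s}$ directly rather than routing it through Remark~\ref{rem:rowcoleq}, and you correctly note that the sign of the shift is immaterial since $i$ ranges over all of $[r]_0$.
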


\begin{proof}
    By Remark~\ref{rem:rowcoleq}, we can treat row vectors as column vectors, and so can associate $h_0$ with the $(r-1)$st column of $H_0$ and $h_1^0$ with the $(r-1)$st column of $H_1$. Because $h_1^i$ for $i \in [r]_0$ correspond to the $r$ columns of $H_1$, the set of $c_i$'s correspond to the intersections of the supports of the final column of $H_0$ and each of the columns of $H_1$. We claim that this is the set of all column intersections between $H_0$ and $H_1$. Let $\mathcal{c}_{i,j}$ be the $j$th column of $H_i$.  Similarly to in the proof of Proposition~\ref{prop:maxcolint}, note that column $\mathbf{c}_{i,j+k}$ is a cyclic shift by $k$ (mod $r$) of column $\mathbf{c}_{i,j}$. In particular, comparing some arbitrary $\mathbf{c}_{0,j}$ with some arbitrary $\mathbf{c}_{1,j'}$ is equivalent to comparing $\mathbf{c}_{0,r-1}$ with $\mathbf{c}_{1,j'+r-1-j}$, subscripts taken modulo $r$. So all column intersections are calculated in this manner, and hence, the maximum of all these intersections gives the maximum column intersection between $H_0$ and $H_1$, proving (1).

    Selecting any two indices in $\supp(h_0) \cap \supp(h_1^i)$ gives rise to a unique $4$-cycle between the final column in $H_0$ and the $i+1 \mod r$ column in $H_1$. Hence, between these two columns, we have $\binom{c_i}{2}$ $4$-cycles. Each of these $4$-cycles is clearly unique, as all involve a fixed column in $H_0$ and a different column in $H_1$, so there are $\sum_{i=0}^{r-1} \binom{c_i}{2}$ $4$-cycles between the final column of $H_0$ and all columns of $H_1$. There are $r$ columns of $H_0$, each a cyclic column shift of the final column of $H_0$, and each contributing $\sum_{i=0}^{r-1} \binom{c_i}{2}$ $4$-cycles between the set column of $H_0$ and all columns of $H_1$. This gives that between $H_0$ and $H_1$ there are in total $r \cdot \sum_{i=0}^{r-1} \binom{c_i}{2}$ $4$-cycles.
\end{proof}

\begin{remark}
    The reader may notice that in Lemma~\ref{lem:4cycles2}, the indices range over values in $[r]_0$, while in Lemma~\ref{lem:4cycles}, the indices range over values $\left\{ 0, \dots, \left\lfloor \frac{r}{2} \right\rfloor \right\}$. Intuitively speaking, this is because for any $4$-cycle with entries in the first column/row of a single circulant, any set of $r$ $4$-cycles coming from a set of shifts of this $4$-cycle has two shifts which result in having entries in the first row/column of that circulant.
\end{remark}

We are now ready to state the main theorem of this section, which gives a concise way to count the number of $4$-cycles in a BIKE private key matrix $H = \left[ H_0 \ H_1 \right]$.\\

\begin{theorem} \label{co:all4cycles}
    Let $h_0, h_1 \in \F_2[x]/\langle x^r - 1\rangle$ and let $H_0, H_1 \in \F_2^{r \times r}$ be the circulant parity-check matrices corresponding to $h_0$, $h_1$, respectively. Define 
    \[ c_i := \left| \supp (h_0) \cap \supp (h_1^{i}) \right| \quad \text{for } i \in [r]_0.\]
    Then the total number of $4$-cycles in the matrix $H = \begin{bmatrix} H_0 & H_1 \end{bmatrix}$ is 
    \[ r \cdot \left( \sum_{\delta=1}^{\left\lfloor r/2 \right\rfloor} \binom{\mu(\delta,h_0)}{2} + \sum_{\delta=1}^{\left\lfloor r/2 \right\rfloor} \binom{\mu(\delta,h_1)}{2} + \sum_{i=0}^{r-1} \binom{c_i}{2} \right). \]
\end{theorem}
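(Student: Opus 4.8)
The plan is to decompose the $4$-cycles in the Tanner graph $T(H)$ of the full matrix $H = \begin{bmatrix} H_0 & H_1 \end{bmatrix}$ according to where the two variable nodes of the cycle live. Recall that a $4$-cycle in a bipartite Tanner graph is determined by a pair of check nodes (rows of $H$) and a pair of variable nodes (columns of $H$) such that both variable nodes are adjacent to both check nodes; equivalently, it corresponds to a pair of columns of $H$ whose supports intersect in at least two common rows, with each such pair of common rows yielding one $4$-cycle. Since the columns of $H$ are partitioned into the $r$ columns coming from $H_0$ and the $r$ columns coming from $H_1$, every $4$-cycle falls into exactly one of three mutually exclusive types: both variable nodes lie in the $H_0$ block, both lie in the $H_1$ block, or one lies in each block.

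The key observation is that these three families are counted by results already established in the excerpt, and the rows of $H$ are shared across both blocks, so no $4$-cycle is lost or double-counted by this partition. First I would note that a $4$-cycle with both variable nodes in the $H_0$ block uses only columns of $H_0$ and two shared rows, hence is exactly a $4$-cycle in the Tanner graph $T(H_0)$; by Lemma~\ref{lem:4cycles} the number of these is $r \cdot \sum_{\delta=1}^{\lfloor r/2 \rfloor} \binom{\mu(\delta,h_0)}{2}$. Symmetrically, the $4$-cycles with both variable nodes in the $H_1$ block are precisely the $4$-cycles of $T(H_1)$, counted by $r \cdot \sum_{\delta=1}^{\lfloor r/2 \rfloor} \binom{\mu(\delta,h_1)}{2}$. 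Finally, a $4$-cycle with one variable node in each block is a $4$-cycle between $H_0$ and $H_1$ in the sense of Lemma~\ref{lem:4cycles2}, so their count is $r \cdot \sum_{i=0}^{r-1} \binom{c_i}{2}$. Summing the three contributions gives the stated formula.

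The step requiring the most care is justifying that this partition is genuinely a disjoint decomposition that captures all $4$-cycles, and in particular that ``$4$-cycles in $T(H_0)$'' as counted by Lemma~\ref{lem:4cycles} coincide with ``$4$-cycles of $T(H)$ whose two variable nodes both lie in the $H_0$ block.'' The subtlety is that $H_0$ and $H_1$ share the same $r$ check nodes (rows) in $H$, whereas Lemma~\ref{lem:4cycles} is stated for the standalone Tanner graph $T(H_0)$; I would argue that restricting a $4$-cycle of $T(H)$ to two columns inside the $H_0$ block, together with the two common rows, is literally a $4$-cycle determined by those two columns and two rows, and the adjacency of $H$ on those columns agrees with the adjacency of $H_0$. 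Thus the internal-block $4$-cycles are in bijection with the $4$-cycles of $T(H_0)$, and no cycle can mix blocks unless exactly one variable node lies in each block, which is the cross-block case. Because the three cases are determined by the unordered pair of variable-node positions relative to the block boundary, they are exhaustive and pairwise disjoint, so the total is simply additive, completing the proof.
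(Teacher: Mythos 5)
Your proposal is correct and matches the paper's approach: the paper's proof is simply ``immediate from Lemma~\ref{lem:4cycles} and Lemma~\ref{lem:4cycles2},'' i.e., the same decomposition into within-$H_0$, within-$H_1$, and cross-block $4$-cycles. You have merely spelled out the disjointness and exhaustiveness of that partition, which the paper leaves implicit.
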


\begin{proof}
    This is immediate from Lemma~\ref{lem:4cycles} and Lemma~\ref{lem:4cycles2}.
\end{proof}

\subsection{General QC-MDPC Codes} \label{sec:generalcycles}

The BIKE cryptosystem is easily generalized to use a private key with an arbitrary number of circulant submatrices. We can establish results analogous to Lemma~\ref{lem:4cycles2} and Theorem~\ref{co:all4cycles} for general quasi-cyclic MDPC codes.

Let $h_{i,j}(x) \in \F_2[x]/\langle x^r-1 \rangle$ for $i \in \{0, \dotsc, c-1\}$ and $j \in \{0, \dotsc, n_0-1\}$ and let $H_{i,j} \in \F_1^{r \times r}$ be the corresponding circulant matrices. For full generality, we also assume that the $h_{i,j}$ do not have the same weights, i.e., it may happen that $|h_{i,j}| := d_{i,j} \neq d_{i',j'}=: |h_{i',j'}|$. Let
\begin{equation} H = \begin{bmatrix} H_{0,0} & H_{0,1} & \dotsm & H_{0,n_0-1} \\
                       H_{1,0} & H_{1,1} & \dotsm & H_{1,n_0-1} \\
                       \vdots & \vdots & \ddots & \vdots \\
                       H_{c-1,0} & H_{c-1,1} & \dotsm & H_{c-1,n_0-1}\end{bmatrix} \label{eq:qcbigmatrix}  \in \F_2^{rc \times rn_0}\end{equation}
be the quasi-cyclic matrix built from the $H_{i,j}$. We sometimes refer to the $H_{i,j}$ as blocks and their rows (resp., columns) as block rows (resp., block columns). We will generalize the results from the previous section to count the number of total $4$-cycles in $H$.

There are more ways for $H$ to have $4$-cycles when $n_0>2$ than there were in the case where $H$ consisted only of two circulant submatrices. We illustrate this in Figure~\ref{fig:matrix4cycles}. In particular, there are the following types of $4$-cycles:
\begin{itemize}
    \item Type A  are contained entirely within a single circulant.\\
    \item Type B  have two entries in the same column in two circulant subblocks which share a block row.\\
    \item Type C  have two entries in the same row in two circulant subblocks which share a block column. \\
    \item Type D  have one entry in each of four different circulant submatrices, where these submatrices pairwise share block row/columns.
\end{itemize}
 In the case of BIKE, only type A and type B $4$-cycles are present. With additional block rows in $H$ with $n_0>2$, there are two more types of $4$-cycles that can appear.

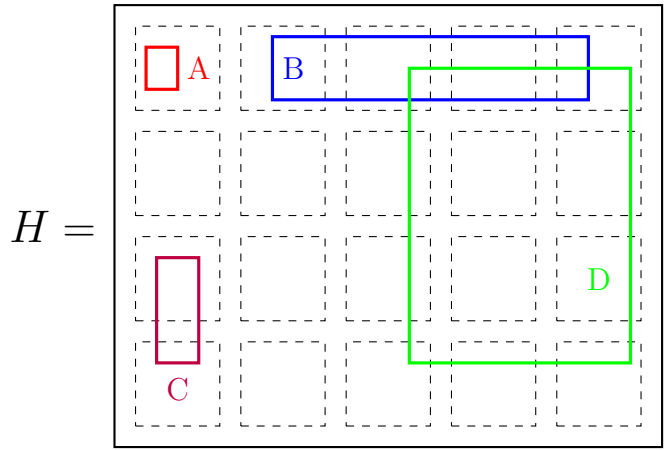
\begin{figure}[ht] \centering \label{fig:matrix4cycles}
    \begin{tikzpicture}[scale=1.4]
        \node[label=left:{\huge{$H=$}}] at (-0.1,2) {};
    
        \draw[black,thick] (-0.1,-0.1) rectangle (5.1,4.1);
        \draw[black,thin,dashed] (0.1,0.1) rectangle (0.9,0.9);
        \draw[black,thin,dashed] (0.1,1.1) rectangle (0.9,1.9);
        \draw[black,thin,dashed] (0.1,2.1) rectangle (0.9,2.9);
        \draw[black,thin,dashed] (0.1,3.1) rectangle (0.9,3.9);
        \draw[black,thin,dashed] (1.1,0.1) rectangle (1.9,0.9);
        \draw[black,thin,dashed] (1.1,1.1) rectangle (1.9,1.9);
        \draw[black,thin,dashed] (1.1,2.1) rectangle (1.9,2.9);
        \draw[black,thin,dashed] (1.1,3.1) rectangle (1.9,3.9);
        \draw[black,thin,dashed] (2.1,0.1) rectangle (2.9,0.9);
        \draw[black,thin,dashed] (2.1,1.1) rectangle (2.9,1.9);
        \draw[black,thin,dashed] (2.1,2.1) rectangle (2.9,2.9);
        \draw[black,thin,dashed] (2.1,3.1) rectangle (2.9,3.9);
        \draw[black,thin,dashed] (3.1,0.1) rectangle (3.9,0.9);
        \draw[black,thin,dashed] (3.1,1.1) rectangle (3.9,1.9);
        \draw[black,thin,dashed] (3.1,2.1) rectangle (3.9,2.9);
        \draw[black,thin,dashed] (3.1,3.1) rectangle (3.9,3.9);
        \draw[black,thin,dashed] (4.1,0.1) rectangle (4.9,0.9);
        \draw[black,thin,dashed] (4.1,1.1) rectangle (4.9,1.9);
        \draw[black,thin,dashed] (4.1,2.1) rectangle (4.9,2.9);
        \draw[black,thin,dashed] (4.1,3.1) rectangle (4.9,3.9);

        \draw[red,very thick] (0.2,3.3) rectangle (0.5,3.7);
        \node at (0.7,3.5) {\Large\textcolor{red}{A}};
        \draw[blue,very thick] (1.4,3.2) rectangle (4.4,3.8);
        \node at (1.6,3.5) {\Large\textcolor{blue}{B}};
        \draw[purple,very thick] (0.3,0.7) rectangle (0.7,1.7);
        \node at (0.5,0.45) {\Large\textcolor{purple}{C}};
        \draw[green,very thick] (2.7,0.7) rectangle (4.8,3.5);
        \node at (4.5,1.5) {\Large\textcolor{green}{D}};
    \end{tikzpicture}
    \caption{This figure illustrates the  types of $4$-cycles, marked $A$, $B$, $C$, and $D$, in a quasi-cyclic parity-check matrix $H$ with circulant blocks pictured as the small squares.}
\end{figure}

Lemma~\ref{lem:4cycles} gives the number of type A $4$-cycles in any individual circulant submatrix $H_{i,j}$. Counting $4$-cycles between any pair of circulant block submatrices in the same row is essentially immediate from Lemma~\ref{lem:4cycles2}, but we will provide a restatement with more general notation as well as cover the case where block matrices are in the same column in the propositions that follow shortly. First, we establish notation for intersection numbers in this more general context.\\

\begin{definition}
\phantom{hi}
    \begin{enumerate}
        \item 
         Let $h_{i,j}, h_{i,j'} \in \F_2[x]/\langle x^r - 1 \rangle$ with $i \neq j$ be polynomials associated with circulant matrices $H_{i,j}, H_{i,j'} \in \F_2^{r \times r}$.
        For $k \in [r]_0$, define the \textit{$k^\text{th}$ column intersection number} to be
        \[ c_{i,j,j',k} := \left| \supp(h_{i,j}) \cap \supp(h_{i,j'}^k) \right|.\]
        \item 
         Let $h_{i,j}, h_{i',j} \in \F_2[x]/ \langle x^r - 1 \rangle$ with $i \neq i'$ be polynomials associated with circulant matrices $H_{i,j}, H_{i',j} \in \F_2^{r \times r}$.
        For $k \in [r]_0$, define the \textit{$k^\text{th}$ row intersection number} to be
        \[ r_{i,i',j,k} := \left| \supp(h_{i,j}) \cap \supp(h_{i',j}^k) \right|.\]
        \item Define the \textit{maximum row intersection} between the circulant matrices $H, H' \in \F_2^{r \times r}$ as
    \[\text{MRI}(H,H') = \max_{i,j \in [r] } \left| \supp \row_i H \cap \supp \row_j H' \right|.\]
    \end{enumerate}
\end{definition}

The number of 
Type B $4$-cycles between arbitrary circulants sharing a block row is given in the next result which has Lemma~\ref{lem:4cycles2} as a special case.\\

\begin{proposition} \label{prop:typeb}
    Let $h_{i,j}, h_{i,j'} \in \F_2[x]/ \langle x^r-1 \rangle$ with $j \neq j'$ be two polynomials corresponding to circulant submatrices in $H$ as in (\ref{eq:qcbigmatrix}). Let $H_{i,j}, H_{i,j'} \in \F_2^{r \times r}$ be the circulant matrices corresponding to $h_{i,j}$ and $h_{i,j'}$, respectively. Then 
    \begin{enumerate}
        \item $\text{MCI}(H_{i,j}, H_{i,j'}) = \max_{k \in [r_0]} c_{i,j,j',k}$ and
        \item the number of $4$-cycles between $H_{i,j}$ and $H_{i,j'}$ is
        \[ r \cdot \sum_{k=0}^{r-1} \binom{ c_{i,j,j',k}}{2}.\]
    \end{enumerate}
\end{proposition}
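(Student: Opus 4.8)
The plan is to reduce this statement directly to the two-circulant situation already settled in Lemma~\ref{lem:4cycles2}. Since $H_{i,j}$ and $H_{i,j'}$ share the block row $i$, they occupy the same horizontal band of $r$ rows in $H$, and a Type B $4$-cycle between them is formed by exactly one column of $H_{i,j}$ and one column of $H_{i,j'}$ that agree in two row positions. The block-row index $i$ plays no active role in this configuration, so the setup is identical to that of a BIKE key with $h_0$ replaced by $h_{i,j}$ and $h_1$ replaced by $h_{i,j'}$. I would therefore rerun the proof of Lemma~\ref{lem:4cycles2} verbatim, simply updating the notation $c_k \mapsto c_{i,j,j',k}$.

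Concretely, I would first invoke Remark~\ref{rem:rowcoleq} to pass between the first-row polynomials and the columns of the circulants, associating $h_{i,j}$ and $h_{i,j'}$ with a fixed reference column (say the last) of $H_{i,j}$ and of $H_{i,j'}$, respectively. Because the cyclic shifts $h_{i,j'}^k$ for $k \in [r]_0$ enumerate all $r$ columns of $H_{i,j'}$, the quantity $c_{i,j,j',k} = |\supp(h_{i,j}) \cap \supp(h_{i,j'}^k)|$ is precisely the size of the support intersection of the reference column of $H_{i,j}$ with the $k$-th column of $H_{i,j'}$. Next, since column $\ell+1$ of a circulant is the cyclic shift of column $\ell$, comparing an arbitrary column of $H_{i,j}$ with an arbitrary column of $H_{i,j'}$ is equivalent, after applying the same shift to both, to comparing the reference column of $H_{i,j}$ with a suitably shifted column of $H_{i,j'}$. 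Taking the maximum over $k$ then yields $\text{MCI}(H_{i,j}, H_{i,j'}) = \max_{k \in [r]_0} c_{i,j,j',k}$, which is part (1). For part (2), each choice of two common support positions between the reference column of $H_{i,j}$ and the $k$-th column of $H_{i,j'}$ determines a unique Type B $4$-cycle, contributing $\binom{c_{i,j,j',k}}{2}$ per column; summing over $k$ counts all $4$-cycles using the reference column, and repeating over each of the $r$ columns of $H_{i,j}$ multiplies this by $r$.

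The one place I would be careful—and the only substantive difference from the single-circulant Lemma~\ref{lem:4cycles}—is the absence of overcounting when multiplying by $r$ while summing over all of $[r]_0$. In the single-circulant case the factor $r$ forced the sum to stop at $\lfloor r/2 \rfloor$ and required identifying the $4$-cycles between row $0$ and row $i$ with those between row $0$ and row $r-i$. Here the two circulants are genuinely distinct blocks, so every Type B $4$-cycle meets $H_{i,j}$ in exactly one column and $H_{i,j'}$ in exactly one column. Grouping the $4$-cycles by which column of $H_{i,j}$ they use therefore partitions them into $r$ disjoint classes, each of equal size $\sum_{k=0}^{r-1}\binom{c_{i,j,j',k}}{2}$ by cyclic symmetry, with no identification collapsing pairs of classes.

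I expect the bulk of the work to be bookkeeping rather than a genuine obstacle: the main thing to state explicitly is precisely this clean partition, since it is exactly what justifies the full range $[r]_0$ of the sum (as opposed to the halved range in Lemma~\ref{lem:4cycles}) and guarantees that the factor of $r$ does not double-count. This also matches the intuition recorded in the remark following Lemma~\ref{lem:4cycles2}.
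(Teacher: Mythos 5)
Your proposal is correct and takes essentially the same route as the paper: the paper gives no separate argument for this proposition, explicitly treating it as an immediate restatement of Lemma~\ref{lem:4cycles2} in more general notation, which is precisely the reduction you carry out. Your added remark on why the factor of $r$ causes no overcounting over the full range $[r]_0$ (in contrast to the halved range in Lemma~\ref{lem:4cycles}) is a sound and welcome clarification consistent with the paper's own remark following Lemma~\ref{lem:4cycles2}.
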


Type C $4$-cycles between circulants sharing a block column can be counted similarly.\\

\begin{proposition} \label{prop:typec}
    Let $h_{i,j}, h_{i',j} \in \F_2[x]/ \langle x^r-1 \rangle$ with $i \neq i'$ be two polynomials corresponding to circulant submatrices in $H$ as in (\ref{eq:qcbigmatrix}). Let $H_{i,j}, H_{i',j} \in \F_2^{r \times r}$ be the circulant matrices corresponding to $h_{i,j}$ and $h_{i',j}$, respectively. Then
    \begin{enumerate}
        \item $\text{MRI}(H_{i,j}, H_{i',j}) = \max_{k \in [r]_0} r_{i,i',j,k}$ and
        \item the number of $4$-cycles between $H_{i,j}$ and $H_{i',j}$ is
        \[ r \cdot \sum_{k=0}^{r-1} \binom{r_{i,i',j,k}}{2}.\]
    \end{enumerate}
\end{proposition}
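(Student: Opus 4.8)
The plan is to prove Proposition~\ref{prop:typec} as the exact row-analogue of Proposition~\ref{prop:typeb} (equivalently Lemma~\ref{lem:4cycles2}), with the roles of rows and columns interchanged. Since $H_{i,j}$ and $H_{i',j}$ share block column $j$, in the Tanner graph of $H$ they attach to the same set of variable nodes (the columns of block column $j$) but sit in distinct block rows, hence distinct check nodes. Consequently a Type C $4$-cycle is built from two check nodes---one row drawn from $H_{i,j}$ and one from $H_{i',j}$---together with two variable nodes (columns) that are common to both of these rows. In other words, Type C $4$-cycles are in bijection with the choices of a row of $H_{i,j}$, a row of $H_{i',j}$, and two column positions in which both rows carry a $1$; equivalently, with pairs of rows, one from each block, whose row intersection is at least $2$. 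The quantity $r_{i,i',j,k} = \left| \supp(h_{i,j}) \cap \supp(h_{i',j}^{k}) \right|$ is precisely the number of shared column positions between the first row of $H_{i,j}$ (namely $h_{i,j}$) and the $k$-th row of $H_{i',j}$ (namely the shift $h_{i',j}^{k}$), so no appeal to Remark~\ref{rem:rowcoleq} is needed here; we work with rows directly.

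For part (1) I would invoke the circulant shift-invariance used in Proposition~\ref{prop:maxcolint} and Lemma~\ref{lem:4cycles2}: applying a simultaneous cyclic shift to a pair consisting of row $a$ of $H_{i,j}$ and row $b$ of $H_{i',j}$ shows their row intersection depends only on $b-a \bmod r$. Hence every pair of rows reduces to the first row of $H_{i,j}$ against some row $k$ of $H_{i',j}$, whose intersection is $r_{i,i',j,k}$, and taking the maximum over all row pairs gives $\text{MRI}(H_{i,j},H_{i',j}) = \max_{k \in [r]_0} r_{i,i',j,k}$. For part (2) I would fix the first row of $H_{i,j}$: choosing any two of the $r_{i,i',j,k}$ common column positions with row $k$ of $H_{i',j}$ yields a unique $4$-cycle, so there are $\binom{r_{i,i',j,k}}{2}$ such $4$-cycles between these two rows, and summing over $k \in [r]_0$ counts all Type C $4$-cycles using the first row of $H_{i,j}$. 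By shift-invariance this per-row total is identical for each of the $r$ rows of $H_{i,j}$, and since every Type C $4$-cycle uses exactly one row of $H_{i,j}$, summing over all $r$ such rows counts each $4$-cycle once, yielding $r \cdot \sum_{k=0}^{r-1} \binom{r_{i,i',j,k}}{2}$.

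The step that I expect to require the most care is the book-keeping for the factor of $r$ together with the range of the index $k$. Here the two blocks are distinct, so---unlike the intra-block count of Lemma~\ref{lem:4cycles}, where the symmetry identifying a distance $\delta$ with $r-\delta$ collapses the sum to $\lfloor r/2 \rfloor$ terms---there is no symmetry identifying row $k$ of $H_{i',j}$ with row $r-k$, and the sum must run over the full range $[r]_0$. I would therefore verify explicitly that (i) the per-row count $\sum_{k}\binom{r_{i,i',j,k}}{2}$ is genuinely independent of which row of $H_{i,j}$ is fixed, and (ii) each Type C $4$-cycle has exactly one of its two defining rows lying in $H_{i,j}$, so that multiplying by $r$ produces no double counting. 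This is the same subtlety flagged in the remark following Lemma~\ref{lem:4cycles2}, transported from columns to rows.
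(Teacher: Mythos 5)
Your argument is correct. It differs from the paper's route in a small but genuine way: the paper proves Proposition~\ref{prop:typec} in one line by invoking Remark~\ref{rem:rowcoleq} to transpose the shared-block-column configuration into the shared-block-row configuration of Proposition~\ref{prop:typeb}, whereas you deliberately bypass the transposition and re-run the counting argument of Lemma~\ref{lem:4cycles2} directly on rows, using that $r_{i,i',j,k}$ is by definition the intersection of $\supp(\row_0 H_{i,j})$ with $\supp(\row_k H_{i',j})$. Both are sound; the paper's reduction is shorter and reuses existing work, while your direct version is self-contained and has the virtue of making explicit the two points the one-line proof leaves implicit --- that the per-row total $\sum_{k}\binom{r_{i,i',j,k}}{2}$ is shift-invariant (since pairing $\row_a H_{i,j}$ with $\row_b H_{i',j}$ gives intersection $r_{i,i',j,\,b-a \bmod r}$, a reindexing of the same sum), and that no halving is needed because the two rows of a Type C $4$-cycle lie in distinct blocks and so cannot be interchanged, which is exactly why the index runs over all of $[r]_0$ rather than $\left[\lfloor r/2 \rfloor\right]$ as in Lemma~\ref{lem:4cycles}.
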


\begin{proof}
    This follows from Proposition~\ref{prop:typeb} and Remark~\ref{rem:rowcoleq}.
\end{proof}

Now we develop the tools to count the number of type D $4$-cycles within $H$. \\

\begin{proposition} \label{prop:typed}
    Let $H \in \F_{2}^{rc \times r n_0}$ be a quasi-cyclic matrix as  in (\ref{eq:qcbigmatrix}). Let $h_{i,j} \in \F_2[x]/\langle x^r-1 \rangle$ for $i \in [c]_0$ and $j \in [n_0]_0$ be the polynomials corresponding to the circulant submatrices $H_{i,j}$ of $H$. The number of $4$-cycles in $H$ is given by the number of tuples 
 $(i, i', j, j', m, m', n, n') \in [c]_0^2 \times [n_0]_0^2 \times [r]_0^4$ with $i < i'$ and $j < j'$
    such that
    \begin{equation} \label{eq:typed}
        h_{i,j}[x^{n-m}] = h_{i,j'}[x^{n'-m}] = h_{i',j}[x^{n-m'}]=h_{i',j'}[x^{n'-m'}] = 1
    \end{equation}
    where exponent operations are taken modulo $r$. Further, the search space for such tuples can be reduced by setting any one of $n$, $n'$, $m$, or $m'$ to $0$ (or any other index) and multiplying the number of such tuples by $r$.
\end{proposition}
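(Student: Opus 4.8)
The plan is to set up an explicit bijection between the Type D $4$-cycles of $T(H)$ and the described tuples, and then to exploit the circulant structure to reduce the search space. First I would fix index conventions: label the rows of $H$ by pairs $(i,m) \in [c]_0 \times [r]_0$, where $(i,m)$ is the $m$th row of the $i$th block row, and label the columns by pairs $(j,n) \in [n_0]_0 \times [r]_0$ analogously. Reading off the circulant structure in (\ref{eq:circulant}) (equivalently (\ref{eq:Hi_circulant})), the entry of $H$ in row $(i,m)$ and column $(j,n)$ equals the coefficient $h_{i,j}[x^{(n-m)\bmod r}]$; in particular it is nonzero exactly when $h_{i,j}[x^{n-m}]=1$. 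Thus the edge of $T(H)$ joining variable node $(j,n)$ to check node $(i,m)$ is present if and only if $h_{i,j}[x^{n-m}]=1$.

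Next I would invoke the characterization used already in Lemma~\ref{lem:4cycles}: a $4$-cycle in the bipartite graph $T(H)$ is precisely a $K_{2,2}$ subgraph, that is, a choice of two distinct variable nodes and two distinct check nodes together with all four joining edges. A Type D cycle is by definition one whose four incident entries lie in four distinct blocks; since any $4$-cycle meets two block rows and two block columns, this occurs exactly when its two check nodes lie in distinct block rows and its two variable nodes lie in distinct block columns. Given such a cycle I would order the two check nodes so their block rows satisfy $i<i'$ and the two variable nodes so their block columns satisfy $j<j'$, writing the check nodes as $(i,m),(i',m')$ and the variable nodes as $(j,n),(j',n')$. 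Because the two block rows (resp.\ block columns) are distinct, this ordering is forced, so the assignment of the cycle to the tuple $(i,i',j,j',m,m',n,n')$ is well defined and injective. Writing out the four edges of the $K_{2,2}$, namely the entries in blocks $(i,j),(i,j'),(i',j),(i',j')$, and applying the entry formula above yields exactly the four conditions of (\ref{eq:typed}); conversely any tuple satisfying (\ref{eq:typed}) with $i<i'$ and $j<j'$ reconstructs a genuine Type D $4$-cycle. This establishes the claimed bijection, and hence the count.

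For the search-space reduction I would use the cyclic symmetry. Consider the action of $\Z/r\Z$ on $[r]_0^4$ by the simultaneous shift $s\cdot(m,m',n,n') = (m+s,m'+s,n+s,n'+s)$ taken modulo $r$. Each of the four exponents in (\ref{eq:typed}) is a difference of a column index and a row index, so all four are invariant under this shift; hence the action preserves the solution set for each fixed $(i,i',j,j')$. The action is free (a shift fixing any one coordinate forces $s=0$), so every orbit has size exactly $r$, and each orbit contains a unique representative with $m=0$ (indeed with $m$ equal to any prescribed index). Therefore the total number of solution tuples equals $r$ times the number of those with $m$ set to a fixed value, which is the asserted reduction; this parallels the ``multiply by $r$'' arguments of Lemma~\ref{lem:4cycles} and Lemma~\ref{lem:4cycles2}.

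The main obstacle is the bookkeeping of the bijection: one must verify that the canonical ordering $i<i'$, $j<j'$ simultaneously avoids double-counting each $4$-cycle and produces all of them, and that the four local edge conditions match (\ref{eq:typed}) with the correct signs. Once the entry formula $(H)_{(i,m),(j,n)} = h_{i,j}[x^{n-m}]$ is pinned down, the remaining steps, namely the $K_{2,2}$ characterization and the free cyclic action, are routine.
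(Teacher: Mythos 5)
Your proposal is correct and follows essentially the same route as the paper's proof: the entry formula $(H)_{(i,m),(j,n)}=h_{i,j}[x^{n-m}]$, the bijection between Type D $4$-cycles and tuples enforced by the orderings $i<i'$, $j<j'$, and the simultaneous cyclic shift of $(m,m',n,n')$ to justify the factor of $r$. Your phrasing of the last step as a free $\Z/r\Z$-action with a unique orbit representative is a slightly cleaner packaging of the paper's ``go down one row and right one column'' argument, but it is the same idea.
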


\begin{proof}
    We first show that any such tuple corresponds to a unique $4$-cycle. We claim that such a tuple corresponds to a $1$ in the following matrix entries: $(H_{i,j})_{m,n}$, $(H_{i,j'})_{m,n'}$, $(H_{i',j})_{m',n}$, and $(H_{i',j'})_{m',n'}$. We will show that $(H_{i,j})_{m,n} = 1$ if and only if $h_{i,j}[x^{n-m}]=1$. That $h_{i,j}$ corresponds to $H_{i,j}$ and vice versa is immediate from the correspondence between circulant matrices and polynomials. If $(H_{i,j})_{m,n} = 1$, there is some corresponding nonzero entry in $\row_1H_{i,j}$. Because the entry $(H_{i,j})_{m,n}$ is in $\row_m H_{i,j}$, we need to move $m$ indices to the left. This gives $\col_{n-m}H_{i,j}$ in $\row_1H_{i,j}$. Hence if $(H_{i,j})_{m,n} = 1$, it must be that $h_{i,j}[x^{n-m}] = 1$. The argument is similar in the other direction to show that $h_{i,j}[x^{n-m}] = 1$ implies $(H_{i,j})_{m,n} = 1$. The arguments for $(H_{i,j'})_{m,n'}$, $(H_{i',j})_{m',n}$, and $(H_{i',j'})_{m',n'}$ are also similar. For uniqueness, note that any $4$-cycle involving one entry in each of four different circulant sub-matrices has a unique set of two row indices ($i$, $i'$) and column indices ($j$, $j'$). The $i < i'$ and $j < j'$ conditions on the tuple guarantee that any $4$-cycle is only counted once.

    We will now show that any $4$-cycle corresponds to a unique tuple. A $4$-cycle involving four different circulants involves exactly two rows and two columns, where the two rows lie in two different block rows and the two columns lie in two different block columns. Let $i,i' \in [c]_0$ be the block row indices with $i < i'$, and $j, j' \in [n_0]_0$ be the block column indices with $j < j'$. The row index in block row $i$ is some value $m \in [r]_0$ from the top of the block row and similarly for $m'$, and the column index in block column $j$ is some value $n \in [r]_0$ from the leftmost column and similarly for $n'$. So any $4$-cycle corresponds to exactly one tuple.

    Now, let $(i,i',j,j',m,m',n,n')$ be a tuple corresponding to a $4$-cycle. Fix $i$, $i'$, $j$, and $j'$. If row $m$ in block row $i$, row $m'$ in block row $i'$, column $n$ in block row $j$, and column $n'$ in block row $j'$ all have $1$'s, then because of the circulant nature of the block matrices, going down one row and right one column also corresponds to a $4$-cycle, i.e., $(i, i', j, j', m+1, m'+1, n+1, n'+1)$ is also a $4$-cycle. So any of these four indices $m$, $m'$, $n$, and $n'$ can be fixed to some value and the total number with the three remaining indices varying can be multiplied by $r$.
\end{proof}

It remains an open question to obtain a closed form expression for the number of Type D $4$-cycles when $n_0>2$. Because these types of cycles do not arise in the parity-check matrices for the BIKE cryptosystem (which has $n_0=2$), we are able to count all $4$-cycles that may impact BIKE in Theorem~\ref{co:all4cycles}.

\section{Filtering Based on Cycle Structure} \label{section:filter}

In this section, we establish a filter that utilizes the tools developed in the previous section,  inspired by the filtering algorithm given in \cite{V21} which filters both Type II and Type III weak keys. Recall that Type I weak keys a subset of Type II weak keys.
In \cite{V21}, it was shown that Type I weak keys contribute more to DFR than Type II weak keys because they have larger column intersection values that are not their maximum column intersection value. 

We are now ready to present the new filtering algorithm, Algorithm~\ref{alg:filtering}. 
By keeping track of column intersection values in addition to the maximum column intersection, the procedure weights Type I weak keys more heavily. {It filters keys with other column intersection values, motivated by the following observation.}  
Suppose that $T$ is the maximum column intersection we would like the filter to avoid. An algorithm that avoids a $T$ column intersection is avoiding $r \binom{T}{2}$ $4$-cycles in $H$. If one would like to avoid the \rmv{number of} $4$-cycles caused by a single $T$ column intersection, then, intuitively, $4$-cycles that contribute to column intersections less than $T$ should be avoided also. The number of $4$-cycles in a key is given by
\[ r \cdot \left( \sum_{i=2}^{d-1} \gamma(i,h_0) \binom{i}{2} + \sum_{i=2}^{d-1} \gamma(i,h_1) \binom{i}{2} + \sum_{i=0}^{r-1} \binom{c_i}{2} \right)=
r \cdot \sum_{\delta = 2}^{d-1}\left( S_{0,\delta} + S_{1,\delta} + Q_\delta\right) \binom{\delta}{2}.\]
Here, the expression on the left-hand side is from  Theorem~\ref{co:all4cycles}, and the right-hand side is as used in Algorithm \ref{alg:filtering}.
This expression suggests a natural choice for a weight vector: $\mathbf{w}$, where $w_i = \binom{i}{2}$ gives the number of $4$-cycles corresponding to a pair of columns with intersection $i$. Notice that different choices for $\mathbf{w}$ could be selected, providing flexibility in the algorithm.

The differences between Algorithm \ref{alg:filtering} and that in \cite{V21}
are the following:
\begin{itemize}
    \item A weight vector $\mathbf{w}$ must be specified, and corresponds to the ``badness'' of any column intersection value.\\
    \item The threshold $s$ is no longer directly tied to maximum column intersection.\\
    \item Distance spectra ($S_0$, $S_1$, and $Q$ in Algorithm \ref{alg:filtering}) must be allocated to memory.\\
    \item What was the Type III weak key filter now checks column intersections between a column in $H_0$ and all columns in $H_1$. The existing algorithm computed $d^2$ vector weights in this area and this algorithm computes $r$ vector weights.\\
    \item There is an additional step before rejecting keys.\\
\end{itemize}
The only additional computation in the new algorithm is the step for rejecting keys. A key cannot be rejected until all spectra have been calculated. In the original algorithm, if a key failed the Type II test, the Type III test would not run. In this sense, Algorithm \ref{alg:filtering} requires only minimally more computational power. 

Algorithm \ref{alg:filtering} does not distinguish between weak key types and instead assumes that all $4$-cycles contribute equally to decoder failure. The algorithm could easily be modified so that there are different weights for column intersections between $H_0$ and $H_1$ and within one of $H_i$ could easily be implemented. 

The operation $\ast$ in Line 12 corresponds to computing the common nonzero coefficients between $h_0$ and $x^k h_1$, which are also the column intersection numbers.

\begin{algorithm}[ht]
\caption{Filter weak keys based on column intersection profile.}\label{alg:filtering}
\hspace*{\algorithmicindent} \textbf{Input:} Block size $r$, column weight $d$, weight vector $\mathbf{w} = (w_0, \dotsc, w_{d-1})$\\
\hspace*{\algorithmicindent} \phantom{\textbf{Input:}} threshold $s$, key $h_0 = x^{j_{0,1}} + \dots + x^{j_{0,d}}$, $h_1 = x^{j_{1,1}} + \dots + x^{j_{1,h}}$ 
\begin{algorithmic}[1]
\For{$i \in \{0,1\}$}
    \State $S_i \leftarrow 0$
    \For{$k \in \{1, \dotsc, d\}$}
        \For{$\ell \in \{k+1, \dotsc, d\}$}
            \State $\delta \leftarrow d(j_{i,k}, j_{i,\ell})$
            \State $S_{i,\delta} \leftarrow S_{i,\delta} + 1$
        \EndFor
    \EndFor
\EndFor
\State $Q \leftarrow 0$
\For{$k \in \{0, \dotsc, r-1\}$}
    \State $\delta \leftarrow \left| h_0 \ast x^k h_1 \right|$
    \State $Q_\delta \leftarrow Q_\delta +1$
\EndFor
\State $T \leftarrow 0$
\For{$\delta \in \{0, \dotsc, d-1\}$} 
    \State $T = w_\delta (S_{0,\delta} + S_{1,\delta}+ Q_\delta)$
    \If{$T \geq s$}
        \State \Return Reject
    \EndIf
\EndFor

\State \Return Accept
\end{algorithmic}
\end{algorithm}

Algorithm \ref{alg:filtering} filters keys with the $(m,\epsilon)$-gathering property.
A similar circumstance as Type I weak keys is almost certainly responsible for the class of weak keys with the $(m,\epsilon)$-gathering property introduced in \cite{WWW23} (see Definition~\ref{def:gatheringproperty}). These are keys $(\mathbf{h}_0,\mathbf{h}_1) \in \mathcal{K}_{m,\epsilon}$ such that most (all but $\epsilon$ many) of the nonzero entries of $\mathbf{h}_0$ are contained within a length $m$ interval of $[r]_0$. In \cite{WWW23}, $m < \frac{r}{2}$. Because these keys have their indices packed together, they are more likely to have higher multiplicities of column intersections, even if they do not have a large enough maximum column intersection to be filtered as Type II weak keys. 
Hence, Algorithm~\ref{alg:filtering} will also filter the $(m,\epsilon)$-gathering weak keys, or at least the subset of these keys that is likely to be detrimental in decoding.

\section{Complete Avoidance of $4$-Cycles} \label{section:ideal}

A natural question to ask, given that short cycles contribute so negatively to iterative decoding, is if $4$-cycles can be avoided in their entirety. In the BIKE setting, for both $h_0, h_1 \in \F_2[x]/\langle x^r-1\rangle$, $\mu(\delta,h_i) \in \{0,1\}$ for all $\delta \in \{1, \dotsc, \left\lfloor r/2 \right\rfloor\}$. In other words, there are no repeated distances in the distance spectra of $h_0$ and $h_1$. This means that each of the $\binom{d}{2}$ distances from $h_0$ are distinct, each of the $\binom{d}{2}$ distances from $h_1$ are distinct, and these two sets of distances are also distinct. This requires, at minimum, $\binom{d}{2} + \binom{d}{2} = 2 \binom{d}{2}$ different distances. But there are only $\left\lfloor \frac{r}{2} \right\rfloor$ possible distances, and so we require
\[ \left\lfloor \frac{r}{2} \right\rfloor \geq 2 \binom{d}{2}.\]
Rearranging and losing the floor function, this simplifies to $r > 2 d(d-1)$. More simply,
\[r > 2 d^2.\]
The MDPC code requirement for BIKE is that $2d \approx \sqrt{2r}$. This indicates that it should theoretically be possible to avoid $4$-cycles completely. However, none of the suggested BIKE parameters given in Table~\ref{table:BIKE_params} are such that $r > 2 d^2$. 

In \cite{V21}, the author approximates the probabilities of a random $h_0 \in \F_2[x]/\langle x^r - 1 \rangle$ will have maximum column intersection $m$. 
Assuming the independence of multiplicities of the spectrum, they show  
for $m \in [d]_0$ and $\delta \in \left[ \left\lfloor \frac{r}{2} \right\rfloor \right]$ and  $h \in \F_2[x]/\langle x^r -1 \rangle$ that: 
\begin{itemize}
    \item the probability of a particular multiplicity is
 $$P \left( \mu(\delta,h) = m \right) = \frac{r \binom{d-1}{d-m-1} \binom{r-d-1}{d-m-1}}{(d-m) \binom{r}{d}}.$$
        \item the probability of a particular upper bound on the multiplicity is $$P \left( \max_{\delta \in \{1, \dotsc, \left\lfloor r/2 \right\rfloor\}} \mu(\delta,h) < m \right)  = (1 - \pi_m)^{\left\lfloor r/2 \right\rfloor}. $$
    \end{itemize}

Table~\ref{table:ideals} gives some small values of $r$ as well as the $r$ suggested for BIKE with $\lambda=128$ bit security ($r = 12323$) and provides $p_{<2}$, the probability that a random $h_0 \in \F_2[x]/\langle x^r-1\rangle$ is $4$-cycle free. The suggested value for $d$ in the $r = 12323$ setting is $142$. The maximum $d$ value with $r=12323$ that could possibly have a $4$-cycle free representation is $d=78$. However, as shown in the table, with $r=12323$, for any value of $d > 20$, the probability of a polynomial having this property is essentially zero. This demonstrates that it is not feasible to try to obtain such keys randomly.

\begin{table*}[ht]
\begin{center}
\begin{tabular}{|l|c|c|c|}
\hline
$d$ & $r = 557$ & $r = 587$ & $r = 12323$ \\
\hline
$4$ & 0.9682 & 0.9698 & 0.9984 \\
$5$ & 0.8986 & 0.9035 & 0.9944 \\
$6$ & 0.7675 & 0.7778 & 0.9851 \\
$7$ & 0.5782 & 0.5941 & 0.9666 \\
$8$ & 0.3661 & 0.3844 & 0.9343 \\
$9$ & 0.1850 & 0.2005 & 0.8825 \\
$10$ & 0.0708 & 0.0801 & 0.8061 \\
$11$ & 0.0195 & 0.0233 & 0.7025 \\
$12$ & NA & 0.0047 & 0.5738 \\
$13$ & NA & NA & 0.4297 \\
$14$ & NA & NA & 0.2869 \\
$15$ & NA & NA & 0.1650 \\
$20$ & NA & NA & 0.0002 \\
\hline
\end{tabular} \\ \vspace{.2cm} 
\caption{The probability that a random $h_0$ with $r$, $d$ as given will have maximum column intersection $1$.}
 \label{table:ideals}
\end{center}
\end{table*}

QC-MDPC codes without $4$-cycles can be constructed algebraically by using \textit{difference families}, which are a highly structured class of sets. This idea was previously considered for quasi-cyclic LDPC codes in \cite{PHNS13}, with a focus on high-rate codes with column weights $3$ or $4$. Specific algebraic code constructions from highly structured sets are unlikely to be helpful in cryptographic settings, since these sets are not large enough to behave as sets of random codes. Even so, they may be interesting to study as codes in their own right.

We close this section with a brief discussion on general QC-MDPC codes. In \cite{SA18}, the authors provide bounds on the weights of circulant matrices making up a quasi-cyclic parity-check matrix that could result in a quasi-cyclic matrix free of $4$-cycles. The result is stated  below, rephrased to better fit the setting in this paper. \\

\begin{remark}\label{thm:bounds_on_r} 
In \cite{SA18}, the authors provide bounds on the weights of circulant blocks of a quasi-cyclic parity-check matrix $H \in F_2^r$ that could result in a quasi-cyclic matrix free of $4$-cycles, showing that a necessary condition for $H$ to have girth (at least) $6$, the circulants must be of size $r \geq \max\{\mathcal{X}, \mathcal{Y}, \mathcal{Z}\}$ where 
$$
\begin{array}{ccl}
\mathcal{X} &=& \max_{i \in [c]_0} \left\lbrace 2 \sum_{j=0}^{n_0-1} \binom{\text{wt}(\mathbf{h}_{i,j})}{2}\right\}, \\ \ \\
\mathcal{Y} &=& \max_{j \in [n_0]_0} \left\lbrace 2 \sum_{i=0}^{c-1} \binom{\text{wt}(\mathbf{h}_{i,j})}{2} \right\rbrace,\\ \ \\
\mathcal{Z} &=& \max_{i \neq i', i, i' \in [c]_0} \left\lbrace \sum_{j=0}^{n_0-1} \left( \text{wt}(\mathbf{h}_{i,j}) \cdot \text{wt}(\mathbf{h}_{i',j} \right) \right\rbrace.
\end{array}
$$
 \end{remark}
  
The BIKE setting can be recovered from the more general Remark~\ref{thm:bounds_on_r} by taking $c=0$, $n_0=1$, and $\text{wt}(h_{i,j}) = d$. It is worth noting that taking $r$ larger than this bound does not guarantee that there exists any QC-MDPC code parity-check matrix with the given circulant row weight $d$ without $4$-cycles.

\section{Experimental results} \label{section:exp}

We computed the number of $4$-cycles in the recorded decoding failures of non-weak keys in  three cases, making use of the data provided by \cite{ABHLPR22} and available at \cite{ABHLPR22github}. In addition, we computed the number of $4$-cycles in the recorded decoding failures of weak keys in both the $r=557$ and $r=587$ cases with $T=3$ (no such data was provided for $T=4$ and $r=587$). The work of  \cite{ABHLPR22}  aimed to estimate the DFR of BIKE by calculating it explicitly for small values of $r$ and extrapolating from those the DFRs for values of $r$, $d$, and $t$ proposed for actual use. They considered values of $r$ prime with $389 \leq r \leq 827$ such that $x^r -1$ has only two irreducible factors modulo $2$ and use $d = 15$ and $t = 18$. We focused on the $r=557$ and $r=587$ cases, as they also provided weak key data for those two values. In both $r=557$ and $r=587$, they used $T=3$ as a threshold, which corresponds to filtering keys with any column intersections higher than $2$. In the $r=587$ case, they also used $T=4$ as a threshold, which filtered keys with column intersections higher than $3$.

We generated $1000$ keys in each of the three cases, using the same filtering algorithm as the weak keys that resulted in decoding failures used. Tables~\ref{tab:r557t3}, \ref{tab:r587t3}, and \ref{tab:r587t4} provide some basic differences in the number of $4$-cycles in both the keys that resulted in decoding failures and the randomly generated keys. In the tables, $H_i$ Average, Minimum, and Maximum are combined data from cycles within the single circulants $H_0$ and $H_1$, because there is no structural difference between cycles of these types. The numbers given for $H_0 \mid H_1$ Average, Minimum, and Maximum indicate $4$-cycles with a variable node in each of $H_0$ and $H_1$. The values for Total Average, Minimum, and Maximum are the sums of $4$-cycles in $H_0$, $H_1$, and between the two circulants.

\begin{table*}[ht]
\centering
\begin{tabular}{|l|c|c|}
\hline
& keys with decoding failures & randomly generated keys \\
\hline
$H_i$ Average & 7307 & 7103 \\
$H_i$ Minimum & 1114 & 1671 \\
$H_i$ Maximum & 13368 & 12811 \\
$H_0 \mid H_1$ Average & 18475 & 18032 \\
$H_0 \mid H_1$ Minimum & 13925 & 11697 \\
$H_0 \mid H_1$ Maximum & 23951 & 24058 \\
Total Average & 33090 & 32239 \\
Total Minimum & 26179 & 23394 \\
Total Maximum & 40104 & 41218 \\
\hline
\end{tabular}
\caption{Cycles in non-weak keys with decoding failures ($n=177$) and randomly generated non-weak keys ($n=1000$) when $r=557$ and $T=3$.}
\label{tab:r557t3}
\end{table*}

\begin{table*}[ht]
\centering
\begin{tabular}{|l|c|c|}
\hline
& keys with decoding failures & randomly generated keys \\
\hline
$H_i$ Average & 8307 & 7206 \\
$H_i$ Minimum & 1761 & 1761 \\
$H_i$ Maximum & 15849 & 14088 \\
$H_0 \mid H_1$ Average & 18472 & 18208 \\
$H_0 \mid H_1$ Minimum & 14088 & 11740 \\
$H_0 \mid H_1$ Maximum & 24067 & 26415 \\
Total Average & 35087 & 32620 \\
Total Minimum & 27002 & 23480 \\
Total Maximum & 44025 & 41677 \\
\hline
\end{tabular}
\caption{Cycles in non-weak keys with decoding failures ($n=128$) and randomly generated non-weak keys ($n=1000$) when $r=587$ and $T=3$.}
\label{tab:r587t3}
\end{table*}

\begin{table*}[ht]
\centering
\begin{tabular}{|l|c|c|}
\hline
& keys with decoding failures & randomly generated keys \\
\hline
$H_i$ Average & 10112 & 9328 \\
$H_i$ Minimum & 2348 & 1761 \\
$H_i$ Maximum & 20545 & 19958 \\
$H_0 \mid H_1$ Average & 21803 & 21438 \\
$H_0 \mid H_1$ Minimum & 15262 & 12914 \\
$H_0 \mid H_1$ Maximum & 28763 & 33459 \\
Total Average & 42026 & 40094 \\
Total Minimum & 33459 & 26415 \\
Total Maximum & 51656 & 59287 \\
\hline
\end{tabular}
\caption{Cycles in non-weak keys with decoding failures ($n=42$) and randomly generated non-weak keys ($n=1000$) when $r=587$ and $T=4$.}
\label{tab:r587t4}
\end{table*}

In order to determine if any of these differences were significant, we calculated the $p$-values for the differences in the means of each of these groups. These are provided in Table~\ref{tab:pvalues}. It is clear from the values that there are more total $4$-cycles in the keys with decoding failures than those without. This is to be expected given the negative impact of short cycles on iterative decoding.

The results very strongly suggest that there are significantly more $4$-cycles within the $H_i$ in the keys with decoding failures. The evidence for prevalence of $4$-cycles across the two circulants is less clear, but the datasets for the two $r=587$ cases are smaller than the dataset for the $r=557$ case. It is possible that with with more keys that exhibited decoder failure, a difference would also be seen in this category.

\begin{table*}
    \centering
    \begin{tabular}{|l|c|c|c|}
        \hline
        & $r=557$, $T=3$ & $r=587$, $T=3$ & $r=587$, $T=4$ \\
        \hline
        $H_i$ & $0.041$ & $<0.001$ & $0.012$ \\
        $H_0 \mid H_1$ & $0.013$ & $0.208$ & $0.471$ \\
        Total & $<0.001$ & $<0.001$ & $0.015$ \\
        \hline
    \end{tabular}
\caption{$p$-values for the difference in the means of the number of $4$-cycles in each of the three cases.}
\label{tab:pvalues}
\end{table*}

Nonetheless, we see that prevalence of $4$-cycles, especially those within a single circulant, is a distinguishing factor between keys that exhibited decoder failure and arbitrary keys. This suggests that one should prioritize minimizing short cycles in a single circulant over concerns about cycles arising across circulants.

\section{Conclusion} \label{section:conclusions}

In this paper, we considered the prevalence of $4$-cycles as structures leading to decoding failure for the iterative decoder used in BIKE. We defined the full spectrum to explain the cycle structure of the graphs. We provided a new filter for the allowable keys in BIKE. We extended these results from the codes used in BIKE to more general parity-check codes. We obtained experimental results suggesting that the impact of $4$-cycles within a single circulant is more detrimental than those across circulants. In additional study, it may be worthwhile to perform parameter analysis as in the new work \cite{Arpin_Markov}. We note that Round 4 of the NIST PQC Standardization process concluded following submission of this manuscript, with HQC being selected. 

\section*{Declarations}

The authors have no financial or proprietary interests in any material discussed in this article. The data generated and analyzed in this article is available at \cite{MM24github}.

\bibliography{bike.bib}
\bibliographystyle{ieeetr} 

\end{document}